\documentclass[]{actacybpress}

\usepackage{cleveref}

\usepackage{amsmath, amssymb, amsbsy, amsthm}
\usepackage[utf8]{inputenc}
\usepackage{t1enc}

\usepackage[mathcal]{eucal}
\usepackage{array}
\def\contradict{\mathrel=\kern-2pt\mathrel|}

\usepackage{listings}
\lstset{
    language=Erlang,
    basicstyle=\footnotesize\ttfamily,
    xleftmargin=15pt,
}

\usepackage[T1]{fontenc}

\usepackage{graphicx, algorithm, algorithmic}  
\usepackage{todonotes}
\usepackage{ebproof} 
\usepackage{mathtools} 
\usepackage[numbers,sort&compress]{natbib} 
\usepackage{url}

\newcommand{\fun}[4]{\mathtt{fun\ } #1(#2, \dots, #3) \rightarrow #4}
\newcommand{\funz}[2]{\mathtt{fun\ } #1() \rightarrow #2}
\newcommand{\case}[4]{\mathtt{case\ } #1 \mathtt{\ of\ } #2 \mathtt{\ then\ } #3 \mathtt{\ else\ } #4}
\newcommand{\letrec}[5]{\mathtt{letrec\ } #1(#2, \dots, #3) \rightarrow #4 \mathtt{\ in\ } #5}

\newcommand{\elet}[3]{\mathtt{let\ } #1 = #2 \mathtt{\ in\ } #3}
\newcommand{\apply}[3]{\mathtt{apply\ } #1(#2, \dots, #3)}
\newcommand{\applyz}[1]{\mathtt{apply\ } #1()}
\newcommand{\bif}[3]{\mathtt{call\ } #1(#2, \dots, #3)}

\newcommand{\idfs}{\mathcal{I}d}

\newcommand{\cons}[2]{{\normalfont \texttt{[} #1 \texttt{|} #2 \texttt{]}}}
\newcommand{\nil}[0]{{\normalfont \texttt{[]}}}

\newcommand{\send}[3]{\textit{send}(#1,#2,#3)}
\newcommand{\receive}[1]{\textit{rec}(#1)}
\newcommand{\arrive}[3]{\textit{arr}(#1,#2,#3)}
\newcommand{\self}[1]{\textit{self}(#1)}
\newcommand{\spawn}[3]{\textit{spawn}(#1, #2, #3)}
\newcommand{\internal}{\tau}
\newcommand{\term}{\Downarrow}
\newcommand{\setflag}{\textit{flag}}
\newcommand{\msg}[1]{\textit{msg}(#1)}
\newcommand{\exit}[2]{\textit{exit}(#1,#2)}
\newcommand{\link}{\textit{link}}
\newcommand{\unlink}{\textit{unlink}}

\newcommand{\subst}[2]{#1[#2]}
\newcommand{\pstep}[3]{#1 \xrightarrow[]{#2} #3}
\newcommand{\nstep}[4]{#1 \xrightarrow[]{#2 : #3} #4}
\newcommand{\nstepbr}[4]{#1 \xrightarrow[]{#2 : #3} \\& #4}
\newcommand{\nstepk}[3]{#1 \xrightarrow[]{#2}\!^* #3}
\newcommand{\nstepkbr}[3]{#1 \xrightarrow[]{#2}\!^* \\& #3}
\newcommand{\nstepstar}[2]{#1 \longrightarrow^* #2}

\newcommand{\lstep}[4]{\langle #1, #2 \rangle \rightarrow \langle #3, #4 \rangle}
\newcommand\doubleplus{+\kern-1.3ex+\kern0.8ex}
\newcommand{\concat}{\doubleplus}
\newcommand{\remfst}[3]{\textit{remFirst}(#1, #2, #3)}
\newcommand{\pop}[2]{\textit{rem}_1(#1, #2)}
\newcommand{\ins}[4]{#1[(#2, #3) \stackrel{\!+}{\mapsto} #4]} 

\newcommand{\rem}[2]{\textit{rem}(#1, #2)}
\newcommand{\cvt}[1]{\textit{convert}(#1)}
\newcommand{\cvtl}[1]{\textit{convert\_list}(#1)}
\newcommand{\mytt}[1]{{\normalfont \texttt{#1}}}

\newcommand{\idstack}{\mathcal{I}d}

\newcommand{\rewrites}[4]{\langle #1, #2 \rangle \longrightarrow \langle #3 , #4 \rangle}
\newcommand{\rewritesbr}[4]{\langle #1, #2 \rangle \longrightarrow \\ &\qquad\langle #3 , #4 \rangle}
\newcommand{\rewritesbrnoi}[4]{\langle #1, #2 \rangle \longrightarrow \\& \langle #3 , #4 \rangle}

\newcommand{\rewritesstarbr}[4]{\langle #1, #2 \rangle \longrightarrow^* \\& \langle #3, #4 \rangle}

\begin{document}



\title{A Formalisation of Core Erlang, a Concurrent Actor Language\thanks{Accepted for publication to Acta Cybernetica on 10th of May, 2023.}}
\author{Péter Bereczky\thanks{ELTE, Eötvös Loránd University, \email{berpeti@inf.elte.hu, daniel-h@elte.hu}, \orcid{0000-0003-3183-0712, 0000-0003-0261-0091}}, Dániel Horpácsi\thanksmark{2}, Simon Thompson\thanksmark{2}  \thanks{University of Kent, \email{S.J.Thompson@kent.ac.uk}, \orcid{0000-0002-2350-301X}}}


\maketitle

\begin{abstract}
In order to reason about the behaviour of programs described in a programming language, a mathematically rigorous definition of that language is needed. 
In this paper, we present a machine-checked formalisation of concurrent Core Erlang (a subset of Erlang) based on our previous formalisations of its sequential sublanguage. 
We define a modular, frame stack semantics, show how program evaluation is carried out with it, and prove a number of properties (e.g. determinism, confluence). Finally, we define program equivalence based on bisimulations and prove that side-effect-free evaluation is a bisimulation.
This research is part of a wider project that aims to verify refactorings to prove that particular program code transformations preserve program behaviour.

\keywords{Formal semantics, Formal verification, Concurrency, Actor model, Program equivalence, Bisimulation, Erlang, Core Erlang, Coq}
\end{abstract}

\section{Introduction}\label{sec:intro}

Our work here contributes to a wider project~\cite{harp} to reason about the correctness of refactorings for functional languages in general, and for Erlang~\cite{programmingerlang} in particular. In our terminology, refactoring is a code transformation that preserves the observable behaviour of programs. Our understanding of the state-of-the-art refactoring tools scene suggests that behaviour preservation (i.e. correctness) is subject to extensive testing, but formal verification is not yet used in practice. We aim to change this, at least in the case of Erlang, and develop higher assurance for refactorings by developing formal, machine-checked theories for program semantics, equivalence and program transformation.

Erlang is a dynamically-typed, impure, functional programming language, which excels at concurrency. Core Erlang~\cite{carlsson2000core} is a standard subset of Erlang that contains all the essential elements of Erlang, so that a semantics of Core Erlang can be extended to a semantics for the full language in a straightforward way.
In earlier work
we defined and implemented formal semantics for the sequential parts of Erlang and Core Erlang, including a reduction semantics for a subset of Erlang using the $\mathbb{K}$ framework~\cite{kerl}, and a natural semantics for a subset of Core Erlang, implemented in Coq~\cite{bereczky2020core,bereczky2020machine}. We have also implemented a functional big-step~\cite{owens2016functional} semantics for this subset of Core Erlang, and shown~\cite{coreerlang} that this semantics is equivalent to the natural semantics. In turn, the semantics was validated~\cite{bereczky2021validation} against the reference implementation of Erlang, namely the Erlang/OTP compiler~\cite{erlangotp}.

Having these semantics defined, we focused on proving the equivalence of programs. On the one hand, we are interested in using the semantics to prove particular pairs of programs equivalent, and on the other, the correctness of many local refactoring steps can be reduced to the equivalence of simple expressions. When developing precise, standard definitions of equivalence, we decided to bring our results to smaller-step semantics and developed a frame stack semantics and equivalence definitions~\cite{wand2018contextual} built on that for sequential Core Erlang~\cite{horpacsi2022program}. The frame stack style for semantics is beneficial for two reasons: it is well-suited to express various standard equivalence definitions~\cite{pitts1998operational}, and furthermore, the semantics of concurrent expressions can be defined more easily in small-step 
approaches~\cite{mosses2006formal}.

Our formalisations of (Core) Erlang are not the first ones. There are a number of other semantics for both sequential and concurrent subsets of (Core) Erlang on which our work has been based. The novelty of our work presented here lies in the fact that it remains more faithful to the language specification~\cite{carlsson2000core} and the reference manual~\cite{erlangref} than the others; for instance, unlike other works, we formalised exit signals and the signal ordering guarantee closely following the specification. We give a more detailed comparison and an overview of the related research in \Cref{sec:work}. Also worth pointing out is that our formal development is accompanied by a machine-checked implementation~\cite{coreerlangmini}.

We continue our formalisation efforts and in this paper we add concurrency to our frame stack semantics for Core Erlang.
In particular, we create the definition in a modular way: the sequential and process-local parts of the semantics can be replaced by a more complete formalised part of Core Erlang or Erlang (or indeed another programming language) without the need to rewrite the whole semantics.
The main contributions of this paper are the following:

\begin{itemize}
\item A modular, frame stack semantics for a concurrent subset of Core Erlang;
\item Proofs about the properties of the concurrent semantics;
\item Results on Core Erlang program equivalence verification using bisimulation.
\end{itemize}

The rest of the paper is structured as follows. In \Cref{sec:background} we introduce (Core) Erlang and our previous work informally, and define the syntax of the formalised sublanguage. In \Cref{sec:dynamic} we describe a modular, dynamic semantics of Core Erlang, focusing on the concurrent sublanguage, then in \Cref{sec:validation} we show the evaluation of simple concurrent programs and prove properties of the semantics. \Cref{sec:bisim} defines the concepts of program equivalence and the corresponding results, and then we discuss related work in \Cref{sec:work}. Finally, \Cref{sec:conclusion} discusses future work and concludes.

\section{Background}\label{sec:background}

As mentioned before, Erlang is a dynamically-typed, impure functional programming language.
The biggest strength of Erlang is that it really excels at concurrent computation, based on the actor model~\cite{agha1988Concurrent}. For this reason, Erlang was initially used in telecommunication and banking systems, but it now plays a role in high-availability, scalable web-based systems.

\subsection{The Erlang Model of Concurrency}\label{sec:model}

Erlang implements and extends the actor model~\cite{agha1988Concurrent}. An Erlang system contains lightweight processes (actors) that can spawn other processes to execute a particular task. Each process executes in its own space, and so they do not share memory. Processes can only communicate by asynchronous message passing. Each process has a message queue (mailbox), where incoming messages are stored in the order of their arrival. A process can select which messages to handle from its mailbox: messages do not need to be handled in the order in which they are received.

Besides messages, processes can also send and receive other signals~\cite{erlangrefprocs}, such as \emph{link}, \emph{unlink} and \emph{exit}. These additional signals can trigger potential changes in the state of the process immediately upon their arrival without being placed into the mailbox. The \emph{link} and \emph{unlink} signals create and remove, respectively, a bi-directional link between two processes, which represents a mutual dependency, and affects the handling of \emph{exit} signals. In general, \emph{exit} signals are used to indicate and initiate termination; they include a reason (describing why they were sent), and a flag indicating whether they were sent through a link (we call this value the \emph{link flag} of the \emph{exit} signal). If one of a pair of linked processes terminates, it will send an \emph{exit} signal to the other process via the link.
Processes can terminate for a number of reasons: having finished evaluation, receiving a particular \emph{exit} signal, or terminating abnormally (e.g. with an exception).

Processes have a flag called \texttt{'trap\_exit'} which, when set, causes \emph{exit} signals to be converted into messages (except in very particular circumstances), i.e.\ the process \emph{traps exits}. Based on this flag, the reason of the exit signal, and whether the \emph{exit} signal was sent through a link, there are three different outcomes (see~\cite{erlangrefprocs} and \Cref{sec:local}): the receiver process 1) terminates, 2) drops the \emph{exit} signal, or 3) converts the \emph{exit} signal to a message and adds it at the end of its mailbox.

In the next section, we present the syntax of the language under formalisation, which is a sublanguage of Core Erlang. Note that Core Erlang is not merely a standard subset of Erlang, it is also used in the compilation process as an intermediate step, and numerous programming languages based on the BEAM platform can be compiled to Core Erlang~\cite{elixircomp}. Furthermore, as for concurrency, the two languages implement essentially the same model.
For more details, we refer to the Erlang Programming book~\cite{programmingerlang} and the reference manual~\cite{erlangref}.

\subsection{Language Syntax}\label{sec:syntax}

In this section we discuss and extend the formal syntax of the sequential sublanguage of Core Erlang as presented in our previous work~\cite{horpacsi2022program}. For better readability, we use a syntax definition that abstracts over the concrete syntax of the language; however, any expressions written in this syntax can be simply transformed to Core Erlang.

\begin{definition}[Language syntax]
\begin{flalign*}
  v \in \textit{Val } ::=&\ i \mid a \mid \iota \mid \nil \mid \cons{v_1}{v_2} \mid \fun{f/k}{x_1}{x_k}{e} \\
  p \in \textit{Pat } ::=&\ i \mid a \mid \iota \mid \nil \mid \cons{p_1}{p_2} \mid x \\
  e \in \textit{Exp } ::=&\ v \mid x \mid f/k\mid \apply{e}{e_1}{e_k} \mid \case{e}{p}{e_1}{e_2} \\
  &\mid \elet{x}{e_1}{e_2} \mid \cons{e_1}{e_2} \mid \letrec{f/k}{x_1}{x_k}{e_0}{e_1} \\
  &\mid \bif{e}{e_1}{e_k} \mid \mathtt{receive}\ p_1 \rightarrow e_1; \dots p_k \rightarrow e_k\ \mathtt{end} \\
\end{flalign*}
\end{definition}

We use $i, k, n$ to range over integers, $a, f$ over atoms,
$x$ over variables, and $\iota$ over process identifiers.
$f/k$ denotes a function identifier, where $f$ is the function name, and $k$ is its arity.
The primitive values of the language are integers (denoted by numbers), atoms (strings of characters, enclosed in single quotation marks), and process identifiers (for simplicity, also denoted by numbers). Besides these, lists and functions are also values, and patterns are built from variables, integers, atoms and process identifiers, and formed into composite patterns as lists\footnote{Tuples are not included in this language, but would be handled similarly to parameter lists.}.

Note that process identifiers are not patterns in Core Erlang, but with process identifiers as patterns, we can distinguish them from other values of the language; in Erlang, the \texttt{is\_pid} function can be used instead. This distinction is needed to maintain the proof of coincidence of sequential equivalence definitions described in~\cite{horpacsi2022program} (namely, the coincidence of behavioural and contextual equivalence).

For simplicity of formalisation, functions are always named to enable explicit recursive calls, but in this paper we omit function names for readability when there are no recursive calls in the body expression. For lists, we use the standard notations of Erlang, that is a list $\cons{e_1}{\cons{e_2}{\cons{\dots}{\cons{e_n}{\nil}} \dots}}$ will be denoted by $\mytt{[}e_1\mytt{,} e_2\mytt{,} \dots\mytt{,} e_n\mytt{]}$. Note that we also include Erlang's improper lists (such as $\cons{\texttt{1}}{\texttt{2}}$), but these do not require specific care in the semantics rules.

Expressions of the sequential sublanguage are values, variables, function identifiers, binding expressions (both \verb|let| and \verb|letrec|), function applications (\verb|apply|), pattern matching (\verb|case|) expressions\footnote{This expression is a simplified version of Core Erlang's \texttt{case}, restricting it to only two branches.}. 

We extend the syntax (as in~\cite{horpacsi2022program}) with two language elements in this work, the first one is BIF (built-in function) call (denoted by $\bif{e}{e_1}{e_k}$), the second is the \verb|receive| expression. BIFs are used to implement both sequential (e.g. addition of integers) and concurrent features of the language.

In particular, the concurrency model introduced in the previous subsection is implemented as follows:
\begin{itemize}
\item the \verb|'!'| BIF is used to send messages;
\item a \texttt{receive} expression is used to select a message from the process mailbox by means of pattern matching;
\item processes are created with the \texttt{'spawn'} BIF (taking a function and its parameters as arguments for the new process to evaluate);
\item \emph{link}, \emph{unlink} and \emph{exit} signals can be sent with the identically named BIFs;
\item the \texttt{'process\_flag'} BIF is used to set the \texttt{'trap\_exit'} flag.
\end{itemize}

The syntax we presented here is implemented in Coq using the nameless variable representation~\cite{deBruijn}. This way, we reuse existing approaches to define capture-avoiding, parallel substitutions~\cite{autosubst}. Nonetheless, we use named variables in this paper for readability.
Substitutions are denoted by $\subst{e}{x_1 \mapsto v_1, \dots, x_k \mapsto v_k}$, which results in replacing $x_1, \dots, x_k$ variables simultaneously with $v_1, \dots, v_k$ values in the expression $e$. We omit further details about substitutions and static semantics since they are not in the scope of this paper. For further details we refer to our previous work~\cite{horpacsi2022program} and to the formalisation~\cite{coreerlangmini}. Next, we show an example expression in the syntax presented above:

\begin{example}[A simple map function in Core Erlang]\label{example:cerlmap}
The following snippet shows a simple sequential Core Erlang function that transforms the elements of a list by applying the function \texttt{F} to each member.
Since it is a rather simple definition, we present it in concrete syntax for better 
readability. To evaluate the function, it suffices to substitute the body of the \verb|letrec| (denoted by \verb|...|) with an application of \verb|'mm'/2|.
\begin{lstlisting}[escapeinside={(*}{*)}]
letrec 'mm'/2 =
  fun(F, E) ->
    case E of [H|T]
      then [ apply F(H) | apply 'mm'/2(F, T) ]
      else []
    end
  in ...
\end{lstlisting}
\vspace{-1em}\hfill$\triangle$
\end{example}

The syntax of the sequential sublanguage is minimal, but representative.

\section{Dynamic Semantics}\label{sec:dynamic}

In this section we explain the dynamic semantics of the formalised Core Erlang subset. We present a three-layered, modular semantics for the language such that the sequential parts of the semantics can be replaced by a more complete formalised part of Core Erlang, Erlang, or another programming language entirely.

\begin{table}[h!]
\centering
\renewcommand{\arraystretch}{1.5}
\begin{tabular}{c|c|c}
    Layer name & Notation & Description \\
\hline
    Inter-process semantics (\Cref{sec:inter}) & $\xrightarrow{\iota:a}$ & System-level reductions \\
    Process-local semantics (\Cref{sec:local}) & $\xrightarrow{a}$ & Process-level reductions \\
    Sequential semantics (\Cref{sec:sequential}) & $\longrightarrow$ & Computational reductions
\end{tabular}
\caption{Layers of the semantics}
\end{table}

\subsection{Sequential Semantics}\label{sec:sequential}

First, we briefly present the sequential semantics~\cite{horpacsi2022program} on which we base the concurrent formalisation.
We highlight that the specification of Core Erlang~\cite{carlsson2000core} does not define the evaluation order of subexpressions, but the compiler employs a leftmost-innermost strategy~\cite{neuhausser2007abstraction}:
during the standard translation of Erlang, the evaluation order is enforced by nested \verb|let| expressions in Core Erlang.
Furthermore, lists in Core Erlang are evaluated from the right\footnote{The reference implementation generates a bytecode sequence that evaluates the list tail before evaluating the list head.}. The compiler's evaluation strategy \emph{is} reflected in our definition.

The semantics has been formally defined as a frame stack semantics~\cite{pitts1998operational}. This definition style resembles reduction semantics~\cite{felleisen1986control}, but the reduction context is deconstructed into a stack of evaluation frames with holes denoted by $\Box$. The frame stack can be regarded as the continuation of the computation.

\begin{definition}[Syntax of frames, frame stacks]

\begin{flalign*}
F \in \textit{Frame} &::= \bif{\Box}{e_1}{e_k} \mid \bif{v}{\Box}{e_k} \mid \cdots \mid \bif{v}{v_1}{\Box} \\
& \mid \apply{\Box}{e_1}{e_k} \mid \apply{v}{\Box}{e_k} \mid \cdots \mid \apply{v}{v_1}{\Box} \\
& \mid \elet{x}{\Box}{e_2} \mid \case{\Box}{p}{e_2}{e_3}\\
& \mid \cons{e_1}{\Box} \mid \cons{\Box}{v_2} \\
K \in \textit{FrameS}&\textit{tack} ::= \idfs \mid F :: K
\end{flalign*}

\end{definition}

\begin{figure}[tb!]

    \begin{flalign*}
    &\rewrites{K}{\elet{x}{e_1}{e_2}}{\elet{x}{\Box}{e_2} :: K}{e_1} \\
    &\rewrites{K}{\cons{e_1}{e_2}}{\cons{e_1}{\Box} :: K}{e_2} \\
    &\rewrites{K}{\apply{e}{e_1}{e_k}}{\apply{\Box}{e_1}{e_k} :: K}{e} \\ 
    &\rewrites{K}{\bif{e}{e_1}{e_k}}{\bif{\Box}{e_1}{e_k} :: K}{e} \\
    &\rewritesbr{K}{\letrec{f/k}{x_1}{x_k}{e_0}{e}}{K}{\subst{e}{f/k \mapsto \fun{ f/k}{x_1}{x_k}{e_0}}} \\
    &\rewrites{K}{\case{e_1}{p}{e_2}{e_3}}{\case{\Box}{p}{e_2}{e_3} :: K}{e_1} \\[1em]
    \hline\\
    &\rewrites{\apply{\Box}{e_1}{e_k} :: K}{v}{\apply{v}{\Box}{e_k} :: K}{e_1} \\ 
    &\rewrites{\bif{\Box}{e_1}{e_k} :: K}{v}{\bif{v}{\Box}{e_k} :: K}{e_1} \\
    &\rewritesbr{\mathtt{apply\ } v(v_1, \dots, v_{i-1}, \Box, e_{i + 1}, \dots, e_k) :: K}{v_i}{\mathtt{apply\ } v(v_1, \dots, v_{i-1}, v_i, \Box, e_{i + 2}, \dots, e_k) :: K}{e_{i+1}} \qquad(\text{if } i < k)\\ 
    &\rewritesbr{\mathtt{call\ } v(v_1, \dots, v_{i-1}, \Box, e_{i + 1}, \dots, e_k) :: K}{v_i}{v(v_1, \dots, v_{i-1}, v_i, \Box, e_{i + 2}, \dots, e_k) :: K}{e_{i+1}} \qquad(\text{if } i < k)\\
    &\rewrites{\cons{e_1}{\Box} :: K}{v_2}{\cons{\Box}{v_2} :: K}{e_1}\\[1em]
    \hline\\
    &\rewrites{\applyz{\Box} :: K}{\funz{f/0}{e}}{K}{\subst{e}{f/0 \mapsto \funz{f/0}{e}}} \\
    &\rewritesbr{\apply{(\fun{f/k}{x_1}{x_k}{e})}{v_1}{\Box} :: K}{v_k}{K}{\subst{e}{f/k \mapsto \fun{f/k}{x_1}{x_k}{e}, x_1 \mapsto v_1, \dots, x_k \mapsto v_k}} \\
    &\rewrites{\mathtt{call\ } \texttt{'+'}(i_1, \Box) :: K}{i_2}{K}{i_1 + i_2} \\
    &\rewrites{\elet{x}{\Box}{e_2}::K}{v}{K}{e_2[x \mapsto v]} \\
    &\rewrites{\cons{\Box}{v_2} :: K}{v_1}{K}{\cons{v_1}{v_2}}\\
    &\rewrites{\case{\Box}{p}{e_2}{e_3}::K}{v}{K}{\subst{e_2}{\textit{match}(p,v)}}\ \  (\text{if } \textit{is\_match}(p,v)) \\
    &\rewrites{\case{\Box}{p}{e_2}{e_3}::K}{v}{K}{e_3} \ \ \ \ \ \ \ \ \ \ \ \ \qquad(\text{if } \neg\textit{is\_match}(p,v)) \\
    \end{flalign*}
\caption{Sequential semantics of Core Erlang}
\label{fig:sequential}
\end{figure}

For the stacks, we use the following notations: $\idfs$ denotes the empty stack and $F :: K$ denotes adding frame $F$ to the top of stack $K$. Next, we introduce two metatheoretical functions for pattern matching:

\begin{itemize}
\item $\textit{is\_match}(p, v)$: determines whether the value $v$ matches the pattern $p$: that is they have been built up with the same constructs of Core Erlang up to pattern variables.
\item $\textit{match}(p, v)$: if the value $v$ matches the pattern $p$, this function returns a substitution which contains the result of the pattern matching in form of a mapping from pattern variables to values.
\end{itemize}

We present the sequential semantics rules in \Cref{fig:sequential}. We use $\rewrites{K}{e}{K'}{e'}$ to denote one reduction step between configurations consisting of a frame stack and an expression to be evaluated.
Recall that $v$, $v_i$ are used for values, $i$, $i_j$ for integer values, and $e$, $e_k$ for (unevaluated) expressions.

The biggest advantage of this semantics definition is that there are no premises in the reduction rules about the reduction of subexpressions since they have been put into the frame stack. Therefore the propagation of concurrent actions to this level is not necessary (i.e. there are no labels on the reduction rules). On the other hand, its disadvantage is that the complex syntax of frames is needed to be defined separately from the syntax of the language. 

The reduction rules can be categorised into three groups:

\begin{itemize}
\item Rules that extract the first redex from language constructs, and put the remainder with a hole into the frame stack.
\item Rules that modify the top frame of the stack by putting the calculated value into the hole, and obtaining the next reducible expression from the same frame.
\item Rules that remove the top element of the frame stack, which also marks that the subexpression has been completely reduced.
\end{itemize}

The evaluation of any language element (except \verb|letrec|) includes using exactly one rule once from the first and third categories. We note that this would change if exceptions and exception handler expressions were present in the sequential language. The connection between exceptions and signals is that when an exception terminates a process, it will emit an exit signal with the details of the exception. The presence of exceptions does not affect the modularity of the definition, but it would require consistent modifications in multiple layers.

\begin{example}[Sequential evaluation of \Cref{example:cerlmap}]\label{ex:seqeval}

We use $\longrightarrow^*$ to denote the reflexive, transitive closure of the relation $\longrightarrow$. For simplicity, we denote the successor function $\texttt{fun}(\texttt{X}) \rightarrow \texttt{call '+'/2}(\texttt{X}, \texttt{1})$ with $f$ in the following example. We also use $\textit{mm}$ to denote the function bound inside the \verb|letrec| expression in \Cref{example:cerlmap}.

The first step is to evaluate the head of the application $\textit{mm}$ to itself (since it is a function). Next, the parameter function $f$ is reduced to itself. Thereafter, the parameter list is reduced ($\mytt{[}\texttt{0}\mytt{,}\texttt{1}\mytt{,}\texttt{2}\mytt{]}$) by deconstructing it starting from the back, pushing the head elements of the sublists into the frame stack. Actually, the semantics just checks in this case that all of these elements are values, and then the list is reconstructed. These actions transform the type of the parameter list from $\cons{e_1}{e_2}$ to $\cons{v_1}{v_2}$, this is the reason why they are necessary, although, there are two seemingly identical configurations in the reduction sequence.

\begin{flalign*}
&\langle \idfs, \texttt{letrec\ } \texttt{'mm'/2} = mm \texttt{ in apply\ 'mm'/2}(f, \mytt{[}\texttt{0}\mytt{,}\texttt{1}\mytt{,}\texttt{2}\mytt{]}) \rangle \longrightarrow \\
&\rewritesbrnoi{\idfs}{\texttt{apply } \textit{mm}(f, \mytt{[}\texttt{0}\mytt{,}\texttt{1}\mytt{,}\texttt{2}\mytt{]})}
{\texttt{apply } \textit{mm}(\Box, \mytt{[}\texttt{0}\mytt{,}\texttt{1}\mytt{,}\texttt{2}\mytt{]})::\idfs}{f}  \longrightarrow \\
&\langle \texttt{apply } \textit{mm}(f, \Box)::\idfs, \mytt{[}\texttt{0}\mytt{,}\texttt{1}\mytt{,}\texttt{2}\mytt{]}\rangle \longrightarrow^*\\
&\langle \cons{\texttt{2}}{\Box}::\cons{\texttt{1}}{\Box}::\cons{\texttt{0}}{\Box}::\texttt{apply } \textit{mm}(f, \Box)::\idfs, \nil \rangle \longrightarrow^* \\
&\langle \texttt{apply } \textit{mm}(f, \Box)::\idfs, \mytt{[}\texttt{0}\mytt{,}\texttt{1}\mytt{,}\texttt{2}\mytt{]}\rangle
\end{flalign*}

Thereafter, the function \textit{mm} is applied by substituting the previous list into its body expression. The pattern in the \verb|case| expression matches the parameter list, thus the first clause will be evaluated.

\begin{flalign*}
&\rewritesbrnoi{\texttt{apply } \textit{mm}(f, \Box)::\idfs}{\mytt{[}\texttt{0}\mytt{,}\texttt{1}\mytt{,}\texttt{2}\mytt{]}}
{\idfs}{\case{\mytt{[}\texttt{0}\mytt{,}\texttt{1}\mytt{,}\texttt{2}\mytt{]}}{\\&\qquad\qquad\cons{\texttt{H}}{\texttt{T}}}{\cons{\texttt{apply } f(\texttt{H})}{\texttt{apply } \textit{mm}(f, \texttt{T})}}{\nil}} \longrightarrow \\
&\langle \idfs, \cons{\texttt{apply } f(\texttt{0})}{\texttt{apply } \textit{mm}(f, \mytt{[}\texttt{1}\mytt{,}\texttt{2}\mytt{]})} \rangle
\end{flalign*}

Next, we continue the evaluation with the tail of the list (since lists are evaluated backwards). Again we evaluate the application of \textit{mm} and reduce the \verb|case| expression, etc. The recursion stops when the list has been consumed. The last sublist, $\nil$ will not match the pattern of the \verb|case| expression, thus the application of \textit{mm} will leave $\nil$ unchanged while being removed from the stack. These reduction steps built up a sequence of applications inside the stack.

\begin{flalign*}
&\langle \idfs, \cons{\texttt{apply } f(\texttt{0})}{\texttt{apply } \textit{mm}(f, \mytt{[}\texttt{1}\mytt{,}\texttt{2}\mytt{]})} \rangle\longrightarrow^* \\
&\rewritesstarbr{\texttt{apply } \textit{mm}(f, \Box)::\cons{\texttt{apply } f(\texttt{0})}{\Box}::\idfs}{\mytt{[}\texttt{1}\mytt{,}\texttt{2}\mytt{]}}
{\texttt{apply } \textit{mm}(f, \Box)::\cons{\texttt{apply } f(\texttt{1})}{\Box}::\cons{\texttt{apply } f(\texttt{0})}{\Box}::\idfs}{\mytt{[}\texttt{2}\mytt{]}} \longrightarrow^* \\
&\rewritesstarbr{\texttt{apply } \textit{mm}(f, \Box)::\cons{\texttt{apply } f(\texttt{2})}{\Box}::\cons{\texttt{apply } f(\texttt{1})}{\Box}::
\\&\qquad\cons{\texttt{apply } f(\texttt{0})}{\Box}::\idfs}{\nil}
{\cons{\texttt{apply } f(\texttt{2})}{\Box}::\cons{\texttt{apply } f(\texttt{1})}{\Box}::\cons{\texttt{apply } f(\texttt{0})}{\Box}::\idfs}{\nil}
\end{flalign*}

Thereafter, the function applications can be evaluated for the elements of the list. First, the top element of the frame is extracted while $\nil$ is placed back. The application of $f$ increases \verb|2| to \verb|3|. Combining the top element of the frame ($\cons{\Box}{\nil}$) and \verb|3|, we obtain the list value $\mytt{[}\texttt{3}\mytt{]}$.

\begin{flalign*}
&\langle\cons{\texttt{apply } f(\texttt{2})}{\Box}::\cons{\texttt{apply } f(\texttt{1})}{\Box}::\cons{\texttt{apply } f(\texttt{0})}{\Box}::\idfs, \nil \rangle
\longrightarrow \\
&\langle \cons{\Box}{\nil}::\cons{\texttt{apply } f(\texttt{1})}{\Box}::\cons{\texttt{apply } f(\texttt{0})}{\Box}::\idfs, \texttt{apply } f(\texttt{2})\rangle \longrightarrow^* \\
&\langle \cons{\Box}{\nil}::\cons{\texttt{apply } f(\texttt{1})}{\Box}::\cons{\texttt{apply } f(\texttt{0})}{\Box}::\idfs, \texttt{3}\rangle \longrightarrow \\
&\langle \cons{\texttt{apply } f(\texttt{1})}{\Box}::\cons{\texttt{apply } f(\texttt{0})}{\Box}::\idfs, \mytt{[}\texttt{3}\mytt{]} \rangle
\end{flalign*}

For the other two elements, we omit the previous steps and just show how the list inside the frame stack is reconstructed.

\begin{flalign*}
&\rewritesstarbr{\cons{\texttt{apply } f(\texttt{1})}{\Box}::\cons{\texttt{apply } f(\texttt{0})}{\Box}::\idfs}{\mytt{[}\texttt{3}\mytt{]}}
{\cons{\texttt{apply } f(\texttt{0})}{\Box}::\idfs}{\mytt{[}\texttt{2}\mytt{,}\texttt{3}\mytt{]}} \longrightarrow^*
\langle \idfs , \mytt{[}\texttt{1}\mytt{,}\texttt{2}\mytt{,}\texttt{3}\mytt{]} \rangle
\end{flalign*}
\hfill$\triangle$
\end{example}

In the next section, we show how we built the concurrent semantics on top of the frame stack relation.

\subsection{Processes, Signals and  Actions}\label{sec:concurrentbackground}

In this section we formalise the notions of \emph{processes}, \emph{signals} and \emph{actions}, on which we build in the next two sections where we describe the concurrent semantics of Core Erlang, first the process-local semantics and then the inter-process semantics. In the remainder of this section we also establish some metatheoretical notation that we use in presenting the semantics.

\begin{definition}[Core Erlang processes]A process ($p \in \textit{Process}$) is either dead or alive.
\begin{itemize}
\item A live process is a quintuple $(K, e, q, \textit{pl},\textit{flag})$, where 
$K$ denotes a frame stack, $e$ is an expression, $q$ is the mailbox (represented as a list of values).   \textit{pl} is the set of linked processes (a list of process identifiers), and \textit{flag} is the status of the \texttt{'trap\_exit'} flag.
\item A terminated (or dead) process is a list of linked process identifiers.
\end{itemize}
\end{definition}

As described earlier, Erlang and Core Erlang implement the actor model~\cite{agha1988Concurrent} for asynchronous communication between processes by message passing. Besides messages, there are other signals that can be sent between the processes (we formalise \textit{exit}, \textit{link}, and \textit{unlink} signals beside messages) which potentially change the state of the process upon arrival without being put into the mailbox. 

Actions represent the effects that characterise concurrency: message send and arrival in a mailbox, processing a mailbox with \emph{receive}, process creation, and so on. An action will have an effect on individual processes (in the process-local semantics) and also \emph{between} processes in the system level, inter-process semantics.

We define the following signals and actions of the semantics.

\begin{definition}[Signals and Actions]\label{def:sigact}
\begin{flalign*}
  s \in \textit{Signal } &::= \msg{v} \mid \exit{v}{b} \mid \link \mid \unlink \\
  a \in \textit{Action } &::= \send{\iota_1}{\iota_2}{s} \mid \receive{v} \mid \self{\iota} \mid \arrive{\iota_1}{\iota_2}{s} \mid \spawn{\iota}{e_1}{e_2} \\ &\mid \internal \mid\ \term\ \mid \setflag
\end{flalign*}
\end{definition}

Signals can be messages (parametrised by a value), exits (parametrised by a reason value and a flag whether the exit was sent through a link), links, and unlinks (which do not have parameters). The source and destination process identifiers are handled by actions, thus they are not included in the signals.
We explain the syntax of actions as follows:
\begin{itemize}
\item Signal sending (\textit{send}) and signal arrival (\textit{arr}) actions carry a signal as a parameter, as well as the source and target process identifiers which are propagated from the inter-process semantics.
\item \textit{rec} actions have as parameter the message that is to be removed from the mailbox. There is no need to include process identifiers since these actions denote a process-local step and the removable message is already present in the mailbox of the process.
\item $\textit{self}$ actions contain the identifier of the executing process as a parameter, which was obtained from the inter-process semantics.
\item \textit{spawn} actions include the new process identifier (propagated from the inter-process semantics), a function expression, and its actual parameters (as a Core Erlang list). The spawned process will execute this function with the given parameters.
\item A sequential ($\internal$) action denotes one reduction step with the sequential semantics.
\item Termination ($\term$) actions denote either normal termination or the execution of the single-parameter \texttt{'exit'} BIF.
\item $\setflag$ actions denote the execution of the \texttt{'process\_flag'} BIF (which does not necessarily change the state of the \texttt{'trap\_exit'} flag).
\end{itemize}

Actions are used as the labels of the one-step evaluation relation. Next, we define the following \emph{metatheoretical} functions and notations for the next sections:

\begin{itemize}
\item $\textit{tt}$ denotes the metatheoretical true, while $\textit{ff}$ denotes false.
\item $x::\textit{xs}$ denotes a list with $x$ as the first element and $\textit{xs}$ as the tail.
\item $[]$ denotes the empty list.
\item $[x_1, \dots, x_n] = x_1 :: (x_2 :: (\dots x_n :: []) \dots)$.
\item $\pop{x}{l}$: creates a list by removing the first occurrence of $x$ from $l$.
\item $\rem{x}{l}$: creates a list by removing all occurrences of $x$ from $l$.
\item $\textit{map}(\textit{fn}, l)$: constructs a list by applying the metatheoretical function $\textit{fn}$ to the elements of $l$.
\item $l_1 \concat l_2$: constructs a list to represent the concatenation of $l_1$ and $l_2$.
\item $\cvt{b}$: maps $\textit{tt}$ to \texttt{'true'} and $\textit{ff}$ to \texttt{'false'}.
\item $\cvt{v}$: maps \texttt{'true'} to $\textit{Some tt}$ and \texttt{'false'} to $\textit{Some ff}$, for other inputs, it returns $\textit{None}$.
\end{itemize}

\subsection{Process-Local Semantics}\label{sec:local}

\begin{figure}[t!]
\centering

\begin{equation}
\begin{prooftree}
\hypo{\lstep{K}{e}{K'}{e'}}
\infer1{\pstep{(K,e,q,\textit{pl},b)}{\internal}{(K',e',q,\textit{pl},b)}}
\end{prooftree}
\label{OS:seq}
\tag{\textsc{Seq}}
\end{equation}

\begin{equation}
\begin{prooftree}
\hypo{\pstep{(K,e,q,\textit{pl},b)}{\arrive{\iota_1}{\iota_2}{\msg{v}}}{(K,e,q \concat [v],\textit{pl},b)}}
\end{prooftree}
\label{OS:msg}
\tag{\textsc{Msg}}
\end{equation}

\begin{equation}
\begin{prooftree}
\hypo{(\iota_1 \neq \iota_2 \land b = \textit{ff} \land v = \texttt{'normal'}) \lor
      (\iota_1 \notin \textit{pl} \land b_e = \textit{tt} \land \iota_1 \neq \iota_2)}
\infer1{\pstep{(K,e,q,\textit{pl},b)}{\arrive{\iota_1}{\iota_2}{\exit{v}{b_e}}}{(K,e,q,\textit{pl},b)}}
\end{prooftree}
\label{OS:exitdrop}
\tag{\textsc{ExitDrop}}
\end{equation}

\begin{equation}
\begin{prooftree}
\hypo{(v = \texttt{'kill'} \land b_e = \textit{ff} \land v' = \texttt{'killed'}) \lor}
\infer[no rule]1{(b = \textit{ff} \land v \neq \texttt{'normal'} \land v' = v \land (b_e = \textit{tt} \rightarrow \iota_1 \in \textit{pl}))
      \lor}
\infer[no rule]1{(b = \textit{ff} \land v = \texttt{'normal'} = v' \land \iota_1 = \iota_2)}
\infer1{\pstep{(K,e,q,\textit{pl},b)}{\arrive{\iota_1}{\iota_2}{\exit{v}{b_e}}}{\textit{map}\ (\lambda \iota \Rightarrow (\iota,v'))\ \textit{pl}}}
\end{prooftree}
\label{OS:exitterm}
\tag{\textsc{ExitTerm}}
\end{equation}

\begin{equation}
\begin{prooftree}
\hypo{b = \textit{tt} \land ((b_e = \textit{ff} \land v \neq \texttt{'kill'}) \lor (b_e = \textit{tt} \land \iota_1 \in \textit{pl}))}
\infer1{\pstep{(K,e,q,\textit{pl},b)}{\arrive{\iota_1}{\iota_2}{\exit{v}{b_e}}}{(K,e,q \concat [\texttt{[}\texttt{'EXIT'}\texttt{,} \iota_1\texttt{,} v\texttt{]}],\textit{pl},b)}}
\end{prooftree}
\label{OS:exitconv}
\tag{\textsc{ExitConv}}
\end{equation}

\begin{equation}
\begin{prooftree}
\hypo{\pstep{(K,e,q,\textit{pl},b)}{\arrive{\iota_1}{\iota_2}{\link}}{(K,e,q,\iota_1::\textit{pl},b)}}
\end{prooftree}
\label{OS:linkarr}
\tag{\textsc{LinkArr}}
\end{equation}

\begin{equation}
\begin{prooftree}
\hypo{\pstep{(K,e,q,\textit{pl},b)}{\arrive{\iota_1}{\iota_2}{\unlink}}{(K,e,q,\rem{\iota_1}{\textit{pl}},b)}}
\end{prooftree}
\label{OS:unlinkarr}
\tag{\textsc{UnlinkArr}}
\end{equation}

\caption{Process local semantics (part 1)}
\label{fig:localsem1}
\end{figure}

Next, we show the process-local semantics (see \Cref{fig:localsem1}, \Cref{fig:localsem2}, and \Cref{fig:localsem3}), denoted by $\pstep{p}{a}{p'}$,  which describes the one-step evaluation of actions by individual processes. We primarily built this semantics by following the techniques of Fredlund's formalisation~\cite{fredlund2001framework}, since it has the widest coverage of language features among previous semantics. We note that the evaluation of the parameters of BIF calls $\bif{e}{e_1}{e_k}$ is handled by the sequential semantics (see~\Cref{sec:sequential}), while the final reductions are formalised in the process-local level of BIF calls, with concrete BIF names, as shown in~\Cref{fig:localsem2} below.

In the following, we make a brief description of the process-local reduction rules. The process identifiers in the reduction rules are propagated from the inter-process semantics via actions. First we detail the rule for sequential steps, and the rules for signal arrival (Figure~\ref{fig:localsem1}):

\begin{itemize}
\item \ref{OS:seq} lifts the computational layer to the process-local level. This is the sequential ($\internal$) reduction rule of the semantics. In this rule, the computational layer could be replaced by any other frame stack semantics, such as a semantics for Erlang.
\item \ref{OS:msg} describes message arrival. Whenever a message arrives, it is appended to the mailbox of the process.
\item \ref{OS:exitdrop} describes when should an exit signal be dropped without modifying the state of the process~\cite[Receiving Exit Signals]{erlangrefprocs}.
\item \ref{OS:exitterm} describes when an exit signal terminates the process~\cite[Receiving Exit Signals]{erlangrefprocs}. The process becomes a terminated process by pairing the exit reason with the linked process identifiers. When an exit signal was sent explicitly, and the reason was \texttt{'kill'}\footnote{The \texttt{'kill'} reason causes unconditional termination almost always. We explain the only exception in \Cref{sec:test} with \Cref{ex:kill}.}, it also has to be converted to \texttt{'killed'} for the links (to prevent unnecessary termination of additional processes that are trapping exits).
\item \ref{OS:exitconv} describes when an exit signal should be converted to a message and appended at the end of the mailbox~\cite[Receiving Exit Signals]{erlangrefprocs}. This action can only occur when the \verb|'trap_exit'| flag of the process is set.
\item \ref{OS:linkarr}, \ref{OS:unlinkarr} rules describe arrival of link and unlink signals. In the first case, a process identifier is added to the links of the process, while in the second case, all occurrences of the process identifier are removed from the links.
\end{itemize}

\begin{figure}[t!]
\centering


\begin{equation}
\begin{prooftree}
\hypo{\pstep{(\texttt{call '!'}(\iota_2,\Box)::K,v,q,\textit{pl},b)}{\send{\iota_1}{\iota_2}{\msg{v}}}{(K,v,q,\textit{pl},b)}}
\end{prooftree}
\label{OS:send}
\tag{\textsc{Send}}
\end{equation}

\begin{equation}
\begin{prooftree}
\hypo{\pstep{(\texttt{call 'exit'}(\iota_2,\Box)::K,v,q,\textit{pl},b)}{\send{\iota_1}{\iota_2}{\exit{v}{\textit{ff}}}}{(K,\texttt{'true'},q,\textit{pl},b)}}
\end{prooftree}
\label{OS:exit}
\tag{\textsc{Exit}}
\end{equation}

\begin{equation}
\begin{prooftree}
\hypo{\pstep{(\texttt{call 'link'}(\Box)::K,\iota_2,q,\textit{pl},b)}{\send{\iota_1}{\iota_2}{\link}}{(K,\texttt{'ok'},q,\iota_2::\textit{pl},b)}}
\end{prooftree}
\label{OS:link}
\tag{\textsc{Link}}
\end{equation}

\begin{equation}
\begin{prooftree}
\hypo{\pstep{(\texttt{call 'unlink'}(\Box)::K,\iota_2,q,\textit{pl},b)}{\send{\iota_1}{\iota_2}{\unlink}}{(K,\texttt{'ok'},q,\rem{\iota_2}{\textit{pl}},b)}}
\end{prooftree}
\label{OS:unlink}
\tag{\textsc{Unlink}}
\end{equation}

\begin{equation}
\begin{prooftree}
\hypo{\pstep{(\iota_2,v)::\textit{pl}}{\send{\iota_1}{\iota_2}{\exit{v}{\textit{tt}}}}{\textit{pl}}}
\end{prooftree}
\label{OS:dead}
\tag{\textsc{Dead}}
\end{equation}

\caption{Process-local semantics (part 2)}
\label{fig:localsem2}

\end{figure}

\noindent
Next, we describe the formal rules of signal sending (\Cref{fig:localsem2}):

\begin{itemize}
\item \ref{OS:send} describes message sending. If the BIF \texttt{'!'} is on the top of the frame stack with the target process identifier, and the message is evaluated to a value, a send action is emitted containing the source (which is propagated from the inter-process semantics in the \ref{OS:nsend} rule) and target identifiers and the message value, while the send expression itself is reduced to the message value.
\item \ref{OS:exit} describes explicitly sending an exit signal to a process. If the two-parameter \texttt{'exit'} BIF is on the top of the frame stack with the target process identifier, and the reason is evaluated to a value, an exit action is emitted with the source (which is propagated from the inter-process semantics in \ref{OS:nsend}), target identifiers, and the exit reason value, while the expression is reduced to \texttt{'true'}. Note that when sending an explicit exit signal, the link flag of the signal is false.
\item \ref{OS:link}, \ref{OS:unlink} rules both reduce the evaluable expression to \texttt{'ok'}. In the first case, a link signal is emitted with the source and target identifier, and the target is appended to the links of the process. In the second case, an unlink signal is emitted with the source and target identifier, and the target is removed from the links of the process.
\item \ref{OS:dead} describes the communication of a terminated process. In this rule, the first item of the links of the dead process is removed while an exit signal is emitted to the target with the reason that is specified in this first item. Note that the link flag of this exit signal is \emph{true}, because this exit is sent through a link.
\end{itemize}

\noindent
Finally, we detail the rest of the process-local rules (Figure~\ref{fig:localsem3}):

\begin{itemize}
\item \ref{OS:self} receives the identifier of the process from the inter-process semantics, and evaluates the \texttt{'self'} BIF call to this identifier.
\item \ref{OS:spawn} describes process creation. The spawned process receives its identifier from the inter-process semantics, and this identifier will be the result of this rule. Note that it is necessary that the first parameter of the \texttt{'spawn'} is a function value, while the second is a correct, object-level parameter list (which is checked in the inter-process semantics).
\item \ref{OS:receive} describes message processing. With pattern matching, the first (oldest) message is selected from the mailbox of the process that matches any clause of the \texttt{receive} expression (if more patterns are matching to the same message, the first matching clause is selected). The evaluation continues with the body expression of the selected clause, substituted by the result (pattern variable - value) bindings.
\item \ref{OS:flag} describes when the process flag \verb|'trap_exit'| changes. The result of this rule is the original value of the flag.
\item \ref{OS:term} describes normal termination, i.e. there are no more continuations in the frame stack, and the evaluable expression has already been reduced to a value. The result is a dead process, which will send exit signals to its links with the reason \texttt{'normal'}.
\item \ref{OS:exitself} describes the call of the single-parameter \texttt{'exit'} BIF. It immediately terminates the process, and exit signals will be sent to the linked processes with the parameter reason value. We note that when introducing exceptions in the future, this version of exit signals will be capable of being caught by exception handlers.
\end{itemize}

\begin{figure}

\begin{equation}
\begin{prooftree}
\hypo{\pstep{(\texttt{call\ } \Box()::K,\texttt{'self'},q,\textit{pl},b)}{\self{\iota}}{(K,\iota,q,\textit{pl},b)}}
\end{prooftree}
\label{OS:self}
\tag{\textsc{Self}}
\end{equation}

\begin{equation}
\begin{prooftree}
\hypo{f = \fun{f/k}{x_1}{x_k}{e}}
\infer1{\pstep{(\texttt{call 'spawn'}(f,\Box)::K,\textit{vs},q,\textit{pl},b)}{\spawn{\iota}{f}{\textit{vs}}}{(K,\iota,q,\textit{pl},b)}}
\end{prooftree}
\label{OS:spawn}
\tag{\textsc{Spawn}}
\end{equation}

\begin{equation}
\begin{prooftree}
\hypo{l = \textit{match}(p_i, v)}
\infer[no rule]1{\textit{is\_match}(p_i, v)}
\infer[no rule]1{q = [v_1, \dots, v_n, v, \dots]}
\hypo{\forall j < i: \neg \textit{is\_match}(p_j, v)}
\infer[no rule]1{(\forall m, j: 1 \le m \le k \land 1 \le j \le n \implies \neg \textit{is\_match}(p_m, v_j))}
\infer2{\pstep{(K,\texttt{receive}\ p_1 \rightarrow e_1; \dots; p_k \rightarrow e_k \ \texttt{end},q,\textit{pl},b)}{\receive{v}}{(K,e_i[l],\pop{v}{q},\textit{pl},b)}}
\end{prooftree}
\label{OS:receive}
\tag{\textsc{Receive}}
\end{equation}

\begin{equation}
\begin{prooftree}
\hypo{\cvt{v} = \textit{Some } v'}
\hypo{v'' = \cvt{b}}
\infer2{\pstep{(\texttt{call 'process\_flag'}(\texttt{'trap\_exit'},\Box)::K,v,q,\textit{pl},b)}{\setflag}{(K,v'',q,\textit{pl},v')}}
\end{prooftree}
\label{OS:flag}
\tag{\textsc{Flag}}
\end{equation}


\begin{equation}
\begin{prooftree}
\hypo{\pstep{(\idstack,v,q,\textit{pl},b)}{\term}{\textit{map}\ (\lambda \iota \Rightarrow (\iota,\texttt{'normal'}))\ \textit{pl}}}
\end{prooftree}
\label{OS:term}
\tag{\textsc{Term}}
\end{equation}

\begin{equation}
\begin{prooftree}
\hypo{\pstep{(\texttt{call 'exit'}()::K,v,q,\textit{pl},b)}{\term}{\textit{map}\ (\lambda \iota \Rightarrow (\iota,v))\ \textit{pl}}}
\end{prooftree}
\label{OS:exitself}
\tag{\textsc{ExitSelf}}
\end{equation}

\caption{Process-local semantics (part 3)}
\label{fig:localsem3}

\end{figure}

\subsection{Inter-Process Semantics}\label{sec:inter}

In this section we discuss the inter-process reduction rules for the semantics. The advantage of this formalisation is that the dynamic semantics of the system is described by only 5 rules (by combining the rules from related work~\cite{harrison2017coerl,lanese2019playing,fredlund2001framework} with the same premises but different actions), which resulted in shorter proofs. 
First, we introduce the necessary concepts.

\begin{definition}[Ether]
An ether (denoted by $\Delta$) is a mapping of source and target identifier pairs to lists of signals. We use $\emptyset$ to denote the empty ether, which maps everything to the empty list\footnote{In the implementation, we formalised the ether as a function which maps (source) process identifiers to a function mapping (target) process identifiers to a list of signals.}.
\end{definition}

We use an ether to express non-atomic signal passing (unlike a number of related works~\cite{harrison2017coerl, fredlund2001framework}); that is, the signals sent from one process do not arrive immediately, but they are transferred via the ether. This is also described in the reference manual~\cite{erlangrefprocs}: ``The amount of time that passes between the time a signal is sent and the arrival of the signal at the destination is unspecified but positive''.

In addition, the ether is used to implement the signal ordering guarantee~\cite{erlangrefprocs}, that is ``if an entity sends multiple signals to the same destination entity, the order is preserved''. If a source sends multiple signals to the same target, these signals will be appended to the end of the list associated with the source and target in the ether. However, if multiple processes send signals to the same destination, the arrival order of these signals is not specified, thus they are included in separate lists in the ether based on their source.

\begin{definition}[Node]
A node is a pair ($(\Delta, \Pi) \in \textit{Node}$) of an ether and a process pool. The process pool (denoted by $\Pi$) is a mapping that associates process identifiers with processes. We denote nodes with $\Sigma$ and the empty process pool with $\emptyset$.
\end{definition}

\noindent
On top of these concepts, we introduce notations and metatheoretical functions:

\begin{itemize}
\item $\iota : p \parallel \Pi$: Appends process $p$ associated with the identifier $\iota$ to the process pool $\Pi$. In formalising this we used function update, so that the order of identifiers is irrelevant. Because of this, we are justified in abusing the notation somewhat when we write the rules using pattern matching: without loss of generality, we assume that the item of interest appears in the head position of the collection of processes given.
\item $\remfst{\Delta}{\iota}{\iota'}$: Removes the first element in the ether $\Delta$ from the list associated with $\iota$ source and $\iota'$ destination, and returns a pair of this removed signal and the result ether inside $\textit{Some}$. If the associated list was empty, it returns $\textit{None}$.
\item $\ins{\Delta}{\iota}{\iota'}{s}$: Creates an ether by appending the signal $s$ to the end of the list associated with $\iota$ source and $\iota'$ destination in the ether $\Delta$ (while keeping other parts of $\Delta$ unchanged).
\item $\Pi \setminus \iota$: Creates a process pool by removing the process associated with $\iota$ from process pool $\Pi$. This operation was also formalised by function updates.
\item $\iota \notin \Pi$: Checks whether there is no process associated with $\iota$ in $\Pi$.
\item $\cvtl{vs}$: Creates a metatheoretical list of expressions based on an object-level Core Erlang list (constructed with \texttt{[\_|\_]} and \texttt{[]}); if successful the result is wrapped with a \emph{Some} constructor; if not, \emph{None} is returned.
\end{itemize}

Next, we define the semantics in~\Cref{fig:interprocsem}. This one-step reduction is denoted by $\nstep{\Sigma}{\iota}{a}{\Sigma'}$ that means the node $\Sigma$ is reduced to $\Sigma'$ by taking a reduction step determined by the action $a$ with the process identified by $\iota$. The rules always include a ``first'' process ($\iota : p \parallel \Pi$), nevertheless, any process from the pool can take this place, since any process in a $\parallel$ chain can be the outermost one, as mentioned before. We give a brief, informal description of the inter-process rules now:

\begin{figure}[t!]

\centering

\begin{equation}
\begin{prooftree}
\hypo{\pstep{p}{\send{\iota_1}{\iota_2}{s}}{p'}}
\infer1{\nstep{(\Delta, \iota_1 : p \parallel \Pi)}{\iota_1}{\send{\iota_1}{\iota_2}{s}}{(\ins{\Delta}{\iota_1}{\iota_2}{s}, \iota_1 : p' \parallel \Pi)}}
\end{prooftree}
\label{OS:nsend}
\tag{\textsc{NSend}}
\end{equation}

\begin{equation}
\begin{prooftree}
\hypo{\pstep{p}{\arrive{\iota_1}{\iota_2}{s}}{p'}}
\hypo{\remfst{\Delta}{\iota_1}{\iota_2}= \textit{Some}\ (s, \Delta')}
\infer2{\nstep{(\Delta, \iota_1 : p \parallel \Pi)}{\iota_1}{\arrive{\iota_1}{\iota_2}{s}}{(\Delta', \iota_1 : p' \parallel \Pi)}}
\end{prooftree}
\label{OS:narrive}
\tag{\textsc{NArrive}}
\end{equation}

\begin{equation}
\begin{prooftree}
\hypo{\nstep{(\Delta, \iota : [] \parallel \Pi)}{\iota}{\term}{(\Delta, \Pi \setminus \iota)}}
\end{prooftree}
\label{OS:nterm}
\tag{\textsc{NTerm}}
\end{equation}

\begin{equation}
\begin{prooftree}
\hypo{\iota_2 \notin (\iota_1 : p \parallel \Pi)}
\infer[no rule]1{\pstep{p}{\spawn{\iota_2}{v}{\textit{vs}}}{p'}}
\hypo{v = \fun{f/k}{x_1}{x_k}{e}}
\infer[no rule]1{\cvtl{vs} = \textit{Some } [v_1, \dots, v_k]}
\infer2{\nstep{(\Delta, \iota_1 : p \parallel \Pi)}{\iota_1}{\spawn{\iota_2}{v}{\textit{vs}}}{(\Delta, \iota_2 : ([], \mathtt{apply}\ v(v_1, \dots, v_k), [], [], \textit{ff}) \parallel \iota_1 : p' \parallel \Pi)}}
\end{prooftree}
\label{OS:nspawn}
\tag{\textsc{NSpawn}}
\end{equation}

\begin{equation}
\begin{prooftree}
\hypo{\pstep{p}{a}{p'}}
\hypo{a \in \{\self{\iota}, \term, \internal, \setflag\} \cup \{\receive{v} \mid v \in Value \}}
\infer2{\nstep{(\Delta, \iota : p \parallel \Pi)}{\iota}{a}{(\Delta, \iota : p' \parallel \Pi)}}
\end{prooftree}
\label{OS:nother}
\tag{\textsc{NOther}}
\end{equation}

\caption{Formal semantics of communication between processes}
\label{fig:interprocsem}

\end{figure}

\begin{itemize}
\item \ref{OS:nsend} describes signal sending. While a process with the identifier $\iota$ is reduced by emitting a send action, the contents of this action (target, source, and signal) are placed into the ether.
\item \ref{OS:narrive} describes how an element is (nondeterministically) removed from the ether. Any signal can be removed from the lists in the ether, if the signal is the first element of that list, and there is a live process with the destination identifier in the process pool.
\item \ref{OS:nterm} describes how a process identifier is freed. When a dead process has notified all of its links, its identifier is removed from the process pool.
\item \ref{OS:nspawn} describes the creation of a new process. The new process is assigned a non-used identifier, and it starts evaluating the function application described in the spawn action of the rule (note that conversion from object-level to meta-level lists is needed). The initial configuration of the new process is the empty frame stack (continuation), the given function application as the evaluable expression, empty mailbox, it has no links, and it does not trap exit signals.
\item \ref{OS:nother} describes the reduction in case of any other action, that is, this rule propagates these actions to the process-local level.
\end{itemize}

We note that in every rule of this semantics, exactly one process is reduced. Furthermore, all reduction rules (except \ref{OS:nterm}) actually propagate the action to the process-local semantics, while modifying the ether or the process pool. We also introduce the following notations on top of the inter-process semantics:

\begin{itemize}
\item $\nstepk{\Sigma}{l}{\Sigma'}$ denotes a special reflexive, transitive closure of the relation $\xrightarrow[]{\iota : a}$, which traces the actions in the list $l$ in forms of $(\iota, a)$ pairs. We use $\nstepk{\Sigma}{\dots}{\Sigma'}$ when the trace is not relevant. For example, if a node $\Sigma$ can be reduced to $\Sigma'$ in the three following steps: 1) the process identified by $\iota$  sends a message $v$ to the process identified by $\iota'$, 2) this message arrives to the target, 3) the message is received by the target, we use
\begin{align*}
\nstepk{\Sigma}{[(\iota,\send{\iota}{\iota'}{\msg{v}}), (\iota',\arrive{\iota}{\iota'}{\msg{v}}), (\iota',\receive{v})]}{\Sigma'}.
\end{align*}
\item $\nstepstar{\Sigma}{\Sigma}$ denotes a reduction sequence from node $\Sigma$ to node $\Sigma'$ that contains only sequential ($\internal$) reduction steps.
\end{itemize}

\paragraph{Discussion.} There are other approaches (e.g.~\cite{fredlund2001framework}) which define fewer actions for the semantics by defining input, output, spawn, and silent actions. In these approaches, $\receive{v}, \setflag, \term, \internal$ could all be handled as silent actions, since they affect the state of a single process and do not communicate. Still, we formalised the more fine-grained version, because this allows us to group the rules of the semantics into more categories. By coupling the aforementioned actions, the less detailed approach can also be simulated. Moreover, we proved theorems (specifically, \Cref{thm:equiv}) which would not be provable if other actions were also considered to be silent. 

\section{Semantics Validation}\label{sec:validation}

After defining a formal semantics, the next step is to validate it~\cite{blazy2009mechanized}. We use two approaches: 1) we evaluate simple parallel programs and compare the results to the results of the Erlang/OTP compiler, and 2) we prove properties of the semantics. We are also investigating ways in which the concurrent semantics can be executed efficiently, which is a necessary step to enable extensive validation against the reference implementation.

\begin{figure}[h!]
    \centering
\tikzset{main node/.style={circle,draw,minimum size=1cm}}
\begin{tikzpicture}
  \node[main node] (1) {$1$};
  \node[main node] (2) [below left = 2cm and 1.5cm of 1]  {$2$};
  \node[main node] (3) [below right = 2cm and 1.5cm of 1] {$3$};
  \path[->,draw,thick]
  (1) edge node [text width=0.9cm,left] {\texttt{'fst'}} (2)
  (2) edge node [below] {\texttt{'fst'}} (3)
  (1) edge node [right] {\ \texttt{'snd'}} (3);
\end{tikzpicture}
    \caption{Actor diagram for \Cref{ex:signals}}
    \label{fig:exampleprocs}
\end{figure}

\subsection{Example Program Evaluation}\label{sec:test}

In this section we present some simple program evaluation case studies that demonstrate how the semantics operates.

\begin{example}[Signal ordering]\label{ex:signals}

The first example  illustrates when the signal ordering guarantee cannot be applied. Let us consider three processes (with the identifiers $1$, $2$, $3$), which evaluate the following expressions.

\begin{enumerate}
\item \texttt{let X = call '!'(2, 'fst') in call '!'(3, 'snd')}
\item \texttt{receive X -> call '!'(3, X) end}
\item \texttt{receive X -> X end}
\end{enumerate}

Next, we construct a node with the empty ether from these processes, and start evaluating it. We use $\Pi$, to denote the process pool constructed from 2 and 3. For simplicity, we omit the list of linked processes and the trap flag, since they are not used during this evaluation. First, we reduce process $1$, since the others are all blocked by \texttt{receive} expressions. This evaluation puts the two messages into the ether.

\begin{flalign}
\begin{split}
&\nstepkbr{(\emptyset, 1: (\idfs, \texttt{let X = call '!'(2, 'fst') in call '!'(3, 'snd')}, []) \parallel \Pi)}{\dots}
          {(\ins{\ins{\emptyset}{1}{2}{\msg{\texttt{'fst'}}}}{1}{3}{\msg{\texttt{'snd'}}}, 1: (\idfs, \texttt{'snd'}, []) \parallel \Pi)}
\end{split}\label{example:step1}\tag{\emph{Init}}
\end{flalign}

We denote the result process pool with $\Pi_1$ without process $3$. Next, we can evaluate process $3$, to which the message \texttt{'snd'} arrives. Then the \texttt{receive} expression removes it from the mailbox and processes it. Thus the final value upon termination in process 3 is \texttt{'snd'}.

\begin{flalign*}
&\nstepbr{(\ins{\ins{\emptyset}{1}{2}{\msg{\texttt{'fst'}}}}{1}{3}{\msg{\texttt{'snd'}}}, \\&\quad\ 3: (\idfs, \texttt{receive X -> X end}, []) \parallel \Pi_1)}{3}{\arrive{1}{3}{\msg{\texttt{'snd'}}}}
{(\ins{\emptyset}{1}{2}{\msg{\texttt{'fst'}}}, 3: (\idfs, \texttt{receive X -> X end}, [\texttt{'snd'}]) \parallel \Pi_1)}\xrightarrow[]{\dots}\!^*\\
&(\ins{\emptyset}{1}{2}{\msg{\texttt{'fst'}}}, 3: (\idfs, \texttt{'snd'}, []) \parallel \Pi_1)
\end{flalign*}

Note that the last configuration we presented above could still progress, because process $2$ can receive and forward the message \texttt{'fst'}.

However, this was not the only option to evaluate this simple program. Instead of evaluating process $3$ in the previous reductions, we can progress with process $2$ (from the state reached in \ref{example:step1}). We denote the process pool containing the terminated process 1 and process 3  with $\Pi_2$. First, the message \texttt{'fst'} arrives to process $2$ which removes it from the mailbox, and forwards it to process $3$.

\begin{flalign*}
&\nstepbr{(\ins{\ins{\emptyset}{1}{2}{\msg{\texttt{'fst'}}}}{1}{3}{\msg{\texttt{'snd'}}}, \\&\quad\ 2: (\idfs, \texttt{receive X -> call '!'(3, X) end}, []) \parallel \Pi_2)}{2}{\arrive{1}{2}{\msg{\texttt{'fst'}}}}
{(\ins{\emptyset}{1}{3}{\msg{\texttt{'snd'}}}, \\&\quad\ 2: (\idfs, \texttt{receive X -> call '!'(3, X) end}, [\texttt{'fst'}]) \parallel \Pi_2)}\xrightarrow[]{\dots}\!^*\\
&(\ins{\ins{\emptyset}{1}{3}{\msg{\texttt{'snd'}}}}{2}{3}{\msg{\texttt{'fst'}}}, 2: (\idfs, \texttt{'fst'}, []) \parallel \Pi_2)
\end{flalign*}

After processes 1 and 2 are terminated (we denote the pool containing these with $\Pi_3$), we evaluate process 3. At this point, either of the messages in the ether could arrive first at process $3$, which will be processed then by the \texttt{receive} expression, since their source is different. We present the case when \texttt{'fst'} arrives first.

\begin{flalign*}
&\nstepbr{(\ins{\ins{\emptyset}{1}{3}{\msg{\texttt{'snd'}}}}{2}{3}{\msg{\texttt{'fst'}}}, \\&\quad\ 3: (\idfs, \texttt{receive X -> X end}, []) \parallel \Pi_3)}{3}{\arrive{2}{3}{\msg{\texttt{'fst'}}}}
{(\ins{\emptyset}{1}{3}{\msg{\texttt{'snd'}}}, \\&\quad\ 3: (\idfs, \texttt{receive X -> X end}, [\texttt{'fst'}]) \parallel \Pi_3)}
 \xrightarrow[]{3 : \arrive{1}{3}{\msg{\texttt{'snd'}}}} \\
&(\emptyset, 3: (\idfs, \texttt{receive X -> X end}, [\texttt{'fst'}, \texttt{'snd'}]) \parallel \Pi_3)
\xrightarrow[]{\dots}\!^*\\
&(\ins{\emptyset}{1}{2}{\msg{\texttt{'snd'}}}, 3: (\idfs, \texttt{'fst'}, [\texttt{'snd'}]) \parallel \Pi_3)
\end{flalign*}

However, with this reduction sequence, process 3 terminates with \texttt{'fst'}. The signal ordering guarantee was not applicable in this scenario, because the messages that process 3 received are from different sources.
\hfill$\triangle$
\end{example}

\begin{example}[Exit signals]\label{ex:kill}
Next, we present an example about sending exit signals, specifically we show the difference between the one- and two-parameter \texttt{'exit'} BIFs. Consider two processes:

\begin{enumerate}
\item \texttt{let X = call 'link'(2) in call 'exit'(1, 'kill')}
\item \texttt{receive X -> X end}
\end{enumerate}

The second process is set to trap exit signals. Once again, we start the evaluation with the first process. In the first steps, process 1 creates the link between the two processes:

\begin{flalign}
\begin{split}
&\nstepkbr{(\emptyset, 1: (\idfs, \texttt{let X = call 'link'(2) in call 'exit'(1, 'kill')}, [], [], \textit{ff}) \parallel 
                       \\&\quad\  2: (\idfs, \texttt{receive X -> X end}, [], [], \textit{tt}) \parallel \emptyset)}{\dots}
            {(\ins{\emptyset}{1}{2}{\link}, 1: (\idfs, \texttt{call 'exit'(1, 'kill')}, [], [2], \textit{ff}) \parallel 
                       \\&\qquad\qquad\qquad\quad\,  2: (\idfs, \texttt{receive X -> X end}, [], [], \textit{tt}) \parallel \emptyset)} \xrightarrow[]{2 : \arrive{1}{2}{\link}} \\
&(\emptyset, 1: (\idfs, \texttt{call 'exit'(1, 'kill')}, [], [2], \textit{ff}) \parallel 
                        \\&\quad\  2: (\idfs, \texttt{receive X -> X end}, [], [1], \textit{tt}) \parallel \emptyset)
\end{split}\label{step:link}\tag{\emph{Link}}
\end{flalign}

Next, the first process terminates itself with the two-parameter \texttt{'exit'}. This involves multiple reduction steps, because the signal needs to be put into the ether, and then retrieved from it. Then the reason will be converted to \texttt{'killed'} because the two-parameter \texttt{'exit'} always sets the link flag of the exit signal to \textit{ff}.

\begin{flalign*}
&\nstepkbr{(\emptyset, 1: (\idfs, \texttt{call 'exit'(1, 'kill')}, [], [2], \textit{ff}) \parallel 
                        \\&\quad\  2: (\idfs, \texttt{receive X -> X end}, [], [1], \textit{tt}) \parallel \emptyset)}{\dots}
{(\ins{\emptyset}{1}{1}{\exit{\texttt{'kill'}}{\textit{ff}}}, 1: (\idfs, \texttt{'true'}, [], [2], \textit{ff}) \parallel 
                       \\&\quad\  2: (\idfs, \texttt{receive X -> X end}, [], [1], \textit{tt}) \parallel \emptyset)} \xrightarrow[]{1 : \arrive{1}{1}{\exit{\texttt{'kill'}}{\textit{ff}}}} \\
&(\emptyset, 1: [(2, \texttt{'killed'})] \parallel 2: (\idfs, \texttt{receive X -> X end}, [], [1], \textit{tt}) \parallel \emptyset)
\end{flalign*}

Next, we propagate the exit signal through the link, and it will be converted to a message because of the trap flag in the execution of process 2.

\begin{flalign*}
&(\emptyset, 1: [(2, \texttt{'killed'})] \parallel 
                         2: (\idfs, \texttt{receive X -> X end}, [], [1], \textit{tt}) \parallel \emptyset) \xrightarrow[]{\dots}\!^* \\
&            (\ins{\emptyset}{1}{2}{\exit{\texttt{'killed'}}{\textit{tt}}}, 
                      \\&\quad\    2: (\idfs, \texttt{receive X -> X end}, [], [1], \textit{tt}) \parallel \emptyset) \xrightarrow[]{2 : \arrive{1}{2}{\exit{\texttt{'killed'}}{\textit{tt}}}} \\
&(\emptyset, 2: (\idfs, \texttt{receive X -> X end}, [\texttt{['EXIT', 1, 'killed']}], [1], \textit{tt}) \parallel \emptyset)\xrightarrow[]{\dots}\!^* \\
&(\emptyset, 2: (\idfs, \texttt{['EXIT', 1, 'killed']}, [], [1], \textit{tt}) \parallel \emptyset
\end{flalign*}

However, if we use the single parameter \texttt{'exit'} BIF, the reduction would be carried out otherwise. We start the evaluation from the analogous state to the point~\ref{step:link} above. It immediately terminates the process without sending signals into the ether. This also causes the reason \texttt{'kill'} not to be converted to \texttt{'killed'}. Next, this exit signal will be sent through a link (the link flag of the signal is \textit{tt}), which enables the use of \ref{OS:exitconv} in process 2.

\begin{flalign*}
&\nstepkbr{(\emptyset, 1: (\idfs, \texttt{call 'exit'('kill')}, [], [2], \textit{ff}) \parallel 
                   \\&\quad\  2: (\idfs, \texttt{receive X -> X end}, [], [1], \textit{tt}) \parallel \emptyset)}{\dots}
             {(\emptyset, 1: [(2, \texttt{'kill'})] \parallel 
                         2: (\idfs, \texttt{receive X -> X end}, [], [1], \textit{tt}) \parallel \emptyset)} \xrightarrow[]{\dots}\!^*\\
&            (\ins{\emptyset}{1}{2}{\exit{\texttt{'kill'}}{\textit{tt}}}, 
                        \\&\quad\ 2: (\idfs, \texttt{receive X -> X end}, [], [1], \textit{tt}) \parallel \emptyset) \xrightarrow[]{2 : \arrive{1}{2}{\exit{\texttt{'kill'}}{\textit{tt}}}} \\
&(\emptyset, 2: (\idfs, \texttt{receive X -> X end}, [\texttt{['EXIT', 1, 'kill']}], [1], \textit{tt}) \parallel \emptyset)\xrightarrow[]{\dots}\!^* \\
&(\emptyset, 2: (\idfs, \texttt{['EXIT', 1, 'kill']}, [], [1], \textit{tt}) \parallel \emptyset
\end{flalign*}

We should note that the \texttt{'kill'} reason is normally used to terminate a process regardless of its current state. Although, in this case (when the signal is sent through a link, i.e. its link flag is set) \texttt{'kill'} does \emph{not} terminate the process in question.
\hfill$\triangle$
\end{example}

\subsection{Properties of the Semantics}\label{sec:property}

After formally evaluating simple programs, we proved some fundamental properties of the layers of the semantics, and formalised the proofs in the Coq theorem prover~\cite{coreerlangmini}. In this section we highlight the most important properties, and provide sketches of the proofs; for more insights, we refer to the formalisation. First, we show the determinism of the sequential and process-local levels.

\begin{theorem}[Sequential and process-local evaluation is deterministic]\label{thm:determinism}
For all frame stacks $K, K', K''$ and expressions $e, e', e''$, if $\rewrites{K}{e}{K'}{e'}$ and $\rewrites{K}{e}{K''}{e''}$, then $K' = K''$ and $e' = e''$.

Similarly, for all processes $p, p', p''$, and actions $a$, if $\pstep{p}{a}{p'}$ and $\pstep{p}{a}{p''}$, then $p' = p''$.
\end{theorem}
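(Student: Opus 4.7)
The plan is to prove both claims by structural case analysis on the first reduction derivation (equivalently, by inversion on the rule used), and then to show that the second derivation must use a rule with mutually exclusive premises, so that it yields the same result. Both relations are defined by a finite set of rules, so each claim reduces to a finite check that, for every fixed left-hand side and (in the process-local case) every fixed action, the applicable rule and all side-condition choices are uniquely determined.

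For the sequential part, I would first observe that each rule in \Cref{fig:sequential} is triggered by a specific syntactic shape of either the expression $e$ (when $e$ is not a value) or the top frame of $K$ (when $e$ is a value). Thus the rule applicable to $\langle K, e \rangle$ is determined by the pair $(K, e)$ alone: one cannot simultaneously match, say, an $\mathtt{apply}$ head with a $\mathtt{let}$ head, nor two different frame shapes on top of $K$. The only rules sharing the same syntactic trigger are the two $\mathtt{case}$ rules on a value, and these are separated by the mutually exclusive predicate $\textit{is\_match}(p,v)$. Since $\textit{match}(p,v)$ is a function of its inputs, the resulting substitution, and hence $e'$, is unique. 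All other right-hand sides are themselves built as functions of the left-hand side data (concatenation, substitution, frame pushing), so $K' = K''$ and $e' = e''$ follow immediately.

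For the process-local part, I would do case analysis on the action $a$, observing that the action tag already largely selects the rule. Internal steps only use \ref{OS:seq} and inherit determinism from the sequential claim above; $\textit{self}(\iota)$, $\textit{flag}$, $\Downarrow$, $\textit{spawn}(\ldots)$, $\textit{send}(\ldots)$, and the various $\textit{arr}$ actions each correspond to a uniquely shaped configuration at the top of the frame stack or to a uniquely shaped process (live vs.\ dead). The \ref{OS:receive} rule deserves special attention: although the action $\textit{rec}(v)$ by itself does not pick the clause index $i$, the premises of the rule uniquely determine it, because they require that $v$ be the first message in $q$ for which \emph{some} pattern matches, and that $p_i$ be the first pattern matching $v$; combined with $\textit{match}$ being a function, this forces $e_i[l]$ and $\pop{v}{q}$ to be unique.

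The main obstacle, and the place where the proof is genuinely delicate, is showing determinism for the three exit-signal arrival rules \ref{OS:exitdrop}, \ref{OS:exitterm}, and \ref{OS:exitconv}, together with the internal disjuncts of each rule (which can in principle propose different values for $v'$ in \ref{OS:exitterm}). I would discharge this by a careful truth-table argument on the five Booleans that control these premises, namely $b$, $b_e$, $\iota_1 = \iota_2$, $\iota_1 \in \textit{pl}$, and the distinguished values $v = \texttt{'normal'}$, $v = \texttt{'kill'}$. Checking pairwise incompatibility across the three rules, and within the disjuncts of \ref{OS:exitterm}, shows that at most one conclusion is reachable. The remaining rules (\ref{OS:msg}, \ref{OS:linkarr}, \ref{OS:unlinkarr}, \ref{OS:dead}) are straightforward: their right-hand sides are fixed functions of the inputs, so determinism is immediate. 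In Coq I would expect this entire argument to be a single induction on the first derivation followed by inversion and automated case splitting on the Boolean premises.
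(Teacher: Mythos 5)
Your proposal takes essentially the same route as the paper's proof, which is precisely a case distinction on the two reduction derivations: when both use the same rule the results coincide (your observation that every right-hand side is a function of the left-hand side, plus the unique-selection argument for \ref{OS:receive}), and when they use different rules a contradiction is found between their premises (your truth-table on $b$, $b_e$, $\iota_1 = \iota_2$, $\iota_1 \in \textit{pl}$ and the distinguished reasons \texttt{'normal'}/\texttt{'kill'}). Your write-up is in fact more detailed than the paper's brief sketch, correctly isolating the exit-signal arrival rules and the clause/message selection in \ref{OS:receive} as the only delicate cases.
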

\begin{proof}
To prove determinism (in both semantics), we carried out case distinction based on the two reduction premises. If both use the same reduction rule, their result is equal, otherwise a contradiction is found between the premises of the different rules.
\end{proof}

The determinism of these layers of the semantics is a natural property; one process should handle an incoming action in the same way in the same inner state. However, we found that the conditions in the reference manual~\cite[Receiving Exit Signals]{erlangrefprocs} are ambiguous in the way that they describe how to handle exit signals. We checked with the reference implementation what the correct conditions are, and encoded them in the premises for reduction rules about exit signals: \ref{OS:exitconv}, \ref{OS:exitdrop}, and \ref{OS:exitterm}.

In the previous sections, we emphasised why the concept of the ether is necessary to ensure the signal ordering guarantee. We formally verified this property.

\begin{theorem}[Signal ordering guarantee]
For all nodes $\Sigma_1, \Sigma_2, \Sigma_3$, process identifiers $\iota, \iota'$, and unique signals\footnote{They are different from any other signal in the starting configuration.} $s_1 \neq s_2$, if $\nstep{\Sigma_1}{\iota}{\send{\iota}{\iota'}{s_1}}{\Sigma_2}$ and $\nstep{\Sigma_2}{\iota}{\send{\iota}{\iota'}{s_2}}{\Sigma_3}$, then for all nodes $\Sigma_4$ and action traces $l$ which satisfy $\nstepk{\Sigma_3}{l}{\Sigma_4}$ and also $(\iota', \arrive{\iota}{\iota'}{s_1}) \notin l$ there is no node $\Sigma_5$ at which $s_2$ can arrive: $\nstep{\Sigma_4}{\iota'}{\arrive{\iota}{\iota'}{s_2}}{\Sigma_5}$.

\end{theorem}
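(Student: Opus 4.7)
The plan is to exhibit an invariant on the ether which is established by the two send steps, preserved throughout the trace $l$, and incompatible with a subsequent arrival of $s_2$ at $\iota'$. Concretely, the invariant is that the signal list $\Delta(\iota,\iota')$ in the current node decomposes as $l_1 \concat [s_1] \concat l_2 \concat [s_2] \concat l_3$, where none of $l_1, l_2, l_3$ contains a copy of $s_1$ or $s_2$. The uniqueness assumption from the footnote guarantees both that this decomposition is unambiguous and that later sends cannot accidentally reintroduce $s_1$ or $s_2$.

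Establishing the invariant at $\Sigma_3$ is immediate from \ref{OS:nsend}: both send steps use $\ins{\cdot}{\iota}{\iota'}{\cdot}$, which appends to the tail of the associated list, so after sending $s_1$ and then $s_2$ the list $\Delta_3(\iota,\iota')$ contains $s_1$ strictly before $s_2$ and no other occurrence of either.

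The heart of the argument is an induction on the length of $\nstepk{\Sigma_3}{l}{\Sigma_4}$. For the inductive step I would case-split on the inter-process rule of the next transition. The rules \ref{OS:nterm}, \ref{OS:nspawn}, and \ref{OS:nother} leave the ether untouched, so the invariant persists trivially. \ref{OS:nsend} appends a signal $s'$ to some slot; either the slot differs from $(\iota,\iota')$, or by uniqueness $s' \notin \{s_1, s_2\}$ and $s'$ joins $l_3$, preserving the decomposition. The critical case is \ref{OS:narrive}: if it does not remove from $(\iota,\iota')$, we are done; if it does, then removing the head either consumes an element of $l_1$ (leaving $s_1$ ahead of $s_2$), or, when $l_1$ is empty, removes $s_1$ itself, in which case the step is labelled $(\iota',\arrive{\iota}{\iota'}{s_1})$, contradicting the hypothesis that this label does not occur in $l$.

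Consequently $\Sigma_4$ satisfies the invariant, and since arrive actions are produced only by \ref{OS:narrive}, any step $\nstep{\Sigma_4}{\iota'}{\arrive{\iota}{\iota'}{s_2}}{\Sigma_5}$ would require $\remfst{\Delta_4}{\iota}{\iota'}$ to return $s_2$, i.e.\ $s_2$ to be the head of $\Delta_4(\iota,\iota')$. But $s_1$ precedes it there, so no such $\Sigma_5$ exists. The main obstacle is the \ref{OS:narrive} case of the induction, which is the only place where the ``arrival-of-$s_1$ excluded'' hypothesis is used, so the decomposition must be formulated carefully enough that ``$s_1$ is currently at the head'' can be detected and converted into precisely the excluded label $(\iota', \arrive{\iota}{\iota'}{s_1})$.
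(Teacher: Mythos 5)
Your proof is correct and follows essentially the same route as the paper's: an induction on the length of the trace $\nstepk{\Sigma_3}{l}{\Sigma_4}$, maintaining an invariant on the ether slot $(\iota,\iota')$ that keeps $s_1$ ahead of $s_2$, and invoking the excluded label $(\iota',\arrive{\iota}{\iota'}{s_1})$ exactly in the \ref{OS:narrive} case where $s_1$ would otherwise be removed. Your explicit decomposition $l_1 \concat [s_1] \concat l_2 \concat [s_2] \concat l_3$ simply spells out what the paper states informally --- that after each step the first removable element of the slot is either $s_1$ or a signal different from $s_2$ --- so the two arguments coincide.
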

\begin{proof}
We proved this theorem by induction on the length of the reduction chain $\nstepk{\Sigma_3}{l}{\Sigma_4}$. In the base case, the first removable element in the ether is either $s_1$ or another signal which is different from $s_2$. In the inductive case, we suppose that there is a reduction chain of length $k$ which does not remove $s_1$ from the ether. Then there is the $(k+1)$th reduction step, which also cannot remove $s_1$ from the ether, based on the hypotheses. Thus once again, the first removable element from the ether is either $s_1$ or another signal which is different from $s_2$.
\end{proof}

This theorem informally states the following: if two signals have been sent from the same sender to the same target, after taking any number of reduction steps, which do not contain the arrival of the first signal, it is not possible that the second signal will arrive to the target.

We also proved a number of confluence properties, which are the basis of proving bisimulation-based program equivalence. Our goal is to prove that sequential evaluation ($\nstepstar{\Sigma}{\Sigma'}$) produces equivalent nodes. The first theorem expresses that a sequential reduction can be carried out after another reduction step if this step does not terminate the process. Otherwise, the sequential reduction cannot be executed after the other action. This property holds for both process-local and inter-process semantics.

\begin{theorem}[Confluence of sequential reductions in the same process]\label{thm:internalswap}
For all processes $p_1, p_2, p_2'$, and action $a$, supposing that $\pstep{p_1}{\internal}{p_2}$ and $\pstep{p_1}{a}{p_2'}$, then there exists a process $p_3$ that satisfies $\pstep{p_2}{a}{p_3}$ and $(\pstep{p_2'}{\internal}{p_3} \lor p_2' = p_3)$.

Similarly, for all nodes $\Sigma_1, \Sigma_2, \Sigma_2'$, process identifiers $\iota$, and actions $a$, supposing that $\nstep{\Sigma_1}{\iota}{\internal}{\Sigma_2}$ and $\nstep{\Sigma_1}{\iota}{a}{\Sigma_2'}$, then there exists a node $\Sigma_3$ that satisfies $\nstep{\Sigma_2}{\iota}{a}{\Sigma_3}$ and $(\nstep{\Sigma_2'}{\iota}{\internal}{\Sigma_3} \lor \Sigma_2' = \Sigma_3)$.
\end{theorem}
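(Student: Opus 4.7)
The plan is to establish the process-local statement by case analysis on the rule used for $\pstep{p_1}{a}{p_2'}$, and then to lift the conclusion to the inter-process statement via a short case analysis on the outer reduction rule.

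For the process-local part, the premise $\pstep{p_1}{\internal}{p_2}$ can only arise from \ref{OS:seq}, so $p_1 = (K,e,q,\textit{pl},b)$ with a sequential step $\rewrites{K}{e}{K'}{e'}$ and $p_2 = (K',e',q,\textit{pl},b)$. I split on $a$. If $a = \internal$, the sequential-determinism half of \Cref{thm:determinism} forces $p_2 = p_2'$, and $p_3 := p_2$ discharges the equality disjunct. If $a$ is a signal arrival via any of \ref{OS:msg}, \ref{OS:exitdrop}, \ref{OS:exitconv}, \ref{OS:linkarr}, \ref{OS:unlinkarr}, or \ref{OS:exitterm}, the rule's premises refer only to $q$, $\textit{pl}$, $b$, the identifiers, and the signal contents; none of these is altered by the sequential step, so the same arrival rule is available from $p_2$ and yields some $p_3$. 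When $p_3$ is a live process, it and $p_2'$ coincide on $q$, $\textit{pl}$, $b$ and differ only in whether the sequential step has been taken, so $\pstep{p_2'}{\internal}{p_3}$ holds by re-applying $\rewrites{K}{e}{K'}{e'}$. In the terminating subcase of \ref{OS:exitterm}, both $p_3$ and $p_2'$ collapse to the same dead process $\textit{map}\,(\lambda \iota \Rightarrow (\iota, v'))\,\textit{pl}$, so $p_2' = p_3$ and the equality disjunct is used.

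Every remaining action (\ref{OS:send}, \ref{OS:exit}, \ref{OS:link}, \ref{OS:unlink}, \ref{OS:dead}, \ref{OS:self}, \ref{OS:spawn}, \ref{OS:receive}, \ref{OS:flag}, \ref{OS:term}, \ref{OS:exitself}) constrains $p_1$ to have the shape of a dead-process list, or the empty stack $\idstack$ with a value, or a fully-applied non-arithmetic BIF frame on top (e.g.\ $\texttt{call '!'}(\iota_2,\Box)::K$, $\texttt{call 'link'}(\Box)::K$, $\texttt{call 'spawn'}(f,\Box)::K$, $\texttt{call 'process\_flag'}(\texttt{'trap\_exit'},\Box)::K$), or a \texttt{receive} as its expression. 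Inspection of \Cref{fig:sequential} shows that no sequential rule matches any of these configurations: a call frame only progresses sequentially when an unevaluated argument remains ($i < k$) or when the BIF is the arithmetic primitive \texttt{'+'}, and neither an empty stack paired with a value nor a \texttt{receive} expression has a matching rule. This contradicts $\pstep{p_1}{\internal}{p_2}$, so these subcases are vacuous.

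The inter-process statement then follows by a similar case analysis on the rule used for $\nstep{\Sigma_1}{\iota}{a}{\Sigma_2'}$. The reduction $\nstep{\Sigma_1}{\iota}{\internal}{\Sigma_2}$ can only come from \ref{OS:nother}, yielding $\pstep{p_1}{\internal}{p_2}$ on the process sitting at $\iota$. If $a$ is produced by \ref{OS:nother} or \ref{OS:narrive}, we apply the already-proved local confluence to obtain a joint successor $p_3$; reassembling the node with the same ether update (either unchanged, or via $\remfst{\Delta}{\iota_1}{\iota_2}$) and the same pool then gives $\nstep{\Sigma_2}{\iota}{a}{\Sigma_3}$ together with the required disjunction on $\Sigma_2'$ and $\Sigma_3$, since the ether and the remaining processes are not inspected by either step. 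The rules \ref{OS:nsend} and \ref{OS:nspawn} reduce to the vacuous local cases above, while \ref{OS:nterm} requires $p_1 = []$, which admits no $\internal$ step. The main obstacle is the \ref{OS:exitterm} branch, which is the single place where the equality disjunct $p_2' = p_3$ is genuinely needed: one must verify that applying the same rule to $p_1$ and to $p_2$ yields identical dead processes, and this hinges on the premise of \ref{OS:exitterm} depending only on $\textit{pl}$, $b$, $\iota_1$, $\iota_2$, $v$, $b_e$, so that the chosen reason $v'$ and the link list $\textit{pl}$ agree from either starting state. A secondary care point is the enumeration of the vacuous cases: one has to check by direct inspection of \Cref{fig:sequential} that no computational reduction is available for a call frame whose last hole is already paired with a value (unless the BIF is \texttt{'+'}), for the empty stack with a value, or for a \texttt{receive} expression.
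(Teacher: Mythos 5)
Your handling of every case with $a \neq \internal$ is sound and is essentially the paper's own proof in expanded form: the paper argues by case distinction on the derivation of $\pstep{p_1}{a}{p_2'}$ and splits into exactly the two situations you identify — actions that leave the process alive only touch the mailbox, the link list, or the \texttt{'trap\_exit'} flag, none of which the sequential step inspects, so the $\internal$ step can be replayed after the action; and the terminating arrival \ref{OS:exitterm}, where $p_2'$ and $p_3$ are the same dead process because the reason $v'$ and the list $\textit{pl}$ are determined by data the sequential step does not change, so the equality disjunct closes the case. Your explicit vacuity enumeration (the send/spawn/receive/flag/term/dead configurations admit no sequential step because a fully-applied non-\texttt{'+'} call frame, an empty stack with a value, and a \texttt{receive} expression match no rule of \Cref{fig:sequential}) is left implicit in the paper's sketch but is correct and is precisely what a mechanised proof must check; the lift to the inter-process level via \ref{OS:nother}/\ref{OS:narrive} with an unchanged or identically-updated ether also matches.

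The genuine gap is your $a = \internal$ subcase. Determinism does give $p_2 = p_2'$, but setting $p_3 := p_2$ does not work: the conclusion is a conjunction, and besides the disjunct $p_2' = p_3$ you must still exhibit $\pstep{p_2}{a}{p_3}$, i.e.\ $\pstep{p_2}{\internal}{p_2}$ — and the sequential semantics has no self-loops, since every rule in \Cref{fig:sequential} changes the stack or the expression. The defect is not merely a mis-chosen witness: take $p_1 = (\elet{x}{\Box}{x}::\idfs,\ \texttt{0},\ q,\textit{pl},b)$, which steps internally to $p_2 = (\idfs, \texttt{0}, q, \textit{pl}, b)$; this $p_2$ admits only a $\term$ action and no $\internal$ step at all, so for $a = \internal$ no witness $p_3$ exists and the case cannot be repaired by a different choice. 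The statement is therefore only provable for $a = \internal$ when $p_2$ is further $\internal$-reducible (in which case $p_2 = p_2'$ makes the \emph{first} disjunct, not the equality one, close the case), and the downstream uses of the lemma need only $a \neq \internal$ instances — \Cref{thm:chainend} splits off the $a = \internal$-in-trace situation before invoking it, and \Cref{thm:equiv} restricts to $a \neq \internal$ by definition of weak bisimulation. You should either add the side condition $a \neq \internal$ to the case analysis or argue the reducibility of $p_2$ explicitly; as written, this step of your proof fails.
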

\begin{proof}
The proof for both semantics relies on case distinction in the derivation of $\nstep{\Sigma_1}{\iota}{a}{\Sigma_2'}$. There are actually two separate cases:
\begin{itemize}
\item If the action $a$ does not terminate the process (still, it potentially modifies either the mailbox, or the list of linked processes, or the \texttt{'trap\_exit'} flag), then the sequential reduction step can be taken after this action too, since these steps are not influenced by the mentioned attributes of the process.
\item If the action $a$ terminates the process, then $p_2'$ and $p_3$ denote the same terminated process, since sequential steps do not modify the list of linked processes, which is the only attribute of a live process that is kept when it terminates.
\end{itemize}
\end{proof}

This theorem is used when two reductions for the same process need to be chained after each other. Actually, this theorem is a stepping stone towards proving \Cref{thm:chainend}.

The next theorem concerns different processes: possible reduction steps can be carried out after each other if they are not both \textit{spawn} actions.

\begin{theorem}[Action ordering]\label{thm:stepchain}
For all nodes $\Sigma_1, \Sigma_2, \Sigma_2'$, process identifiers $\iota \neq \iota'$, actions $a, a'$, which are not both \textit{spawn} actions, if $\nstep{\Sigma_1}{\iota}{a}{\Sigma_2}$ and $\nstep{\Sigma_1}{\iota'}{a'}{\Sigma_2'}$ then there exists a node $\Sigma_3$, which can be reached from $\Sigma_2$ with action $a'$: $\nstep{\Sigma_2}{\iota'}{a'}{\Sigma_3}$.
\end{theorem}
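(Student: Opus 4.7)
The plan is to prove the theorem by case analysis on the inter-process rule used to derive the second reduction $\nstep{\Sigma_1}{\iota'}{a'}{\Sigma_2'}$, and for each rule, to re-apply the same rule to $\Sigma_2$ in order to construct $\Sigma_3$. The central observation that makes this work is that because $\iota \neq \iota'$, the local state of the process identified by $\iota'$ is untouched by the first reduction: each inter-process rule modifies only the local state of $\iota$ (\ref{OS:nother}, \ref{OS:nsend}, \ref{OS:narrive}, \ref{OS:nspawn} for the initiator), removes $\iota$ from the pool (\ref{OS:nterm}), or possibly adds a brand-new PID different from $\iota'$. In every case the process record at $\iota'$ in $\Sigma_2$ coincides with that in $\Sigma_1$, so the process-local premise required by the rule deriving $a'$ remains derivable from $\Sigma_2$.

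The remaining obligation for each case is to verify that the inter-process side conditions on the ether and pool still hold in $\Sigma_2$. For \ref{OS:nother} and \ref{OS:nterm} no such side conditions exist. For \ref{OS:nsend} the ether operation is just an append, which trivially commutes with any disjoint modification performed by $a$. For \ref{OS:narrive} of a signal from some $\iota_1$ to $\iota'$, the required first element of $\Delta[\iota_1, \iota']$ is preserved because $a$ either (i) modifies $\Delta[\cdot, \iota]$ when it is an arrive, a disjoint key since $\iota \neq \iota'$, (ii) appends to some $\Delta[\iota, \cdot]$ when it is a send, which only affects the tail of the corresponding list, or (iii) does not touch the ether at all.

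The main obstacle is the \ref{OS:nspawn} case, since it carries the freshness condition $\iota_2' \notin (\iota' : p \parallel \Pi)$ on the PID chosen for the new process. The only reduction rule that can insert a new PID into the pool is \ref{OS:nspawn} itself. The hypothesis that $a$ and $a'$ are not \emph{both} spawn actions is precisely what is needed here: under it, $a$ does not enlarge the pool, so the PID picked for $a'$ remains fresh in $\Sigma_2$ and the spawn rule applies. I expect this to be the one case in which the explicit hypothesis of the theorem is essential; all other combinations follow by a routine check that disjoint modifications of the ether and pool preserve the premises of the inter-process rule for $a'$.
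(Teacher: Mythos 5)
Your proposal is correct and takes essentially the same route as the paper's proof: a case analysis on the reduction rules, using the fact that a step by process $\iota$ leaves the local state of $\iota' \neq \iota$ and the relevant ether entries untouched (so disjoint modifications commute), with the freshness side condition of \ref{OS:nspawn} being exactly where the not-both-\textit{spawn} hypothesis is needed. Your write-up is in fact more explicit than the paper's sketch about the ether bookkeeping (appends only affect list tails, arrivals use disjoint keys), but the underlying argument is the same.
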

\begin{proof}
This theorem is proved by case separation on the two reduction premises. There is no scheduling algorithm formalised in the semantics, thus any process can be reduced if it is not in a stuck configuration (i.e. if it is waiting for a message to evaluate a \texttt{receive} expression). We can define any order for the reductions of different processes (except if both reductions are labelled by \emph{spawn} actions), because both of the reductions in the premise can always be carried out. The only action $a$ in the first reduction that could prevent making the second reduction (with action $a'$) is the arrival of an exit signal that terminates the process identified by $\iota'$, but the premise $\iota \neq \iota'$ rules this case out.
\end{proof}

The premise that restricts \emph{spawn} actions is necessary because we cannot assure that these spawned processes obtain the same process identifiers if their spawn order is reversed (currently, the semantics assigns fresh process identifiers to spawned processes based on the list of process identifiers already in use). This theorem is also a stepping stone towards \Cref{thm:chainfront}.

The following theorems contain any-step reduction chains. The first of these theorem expresses that if there are $\internal$ actions and an additional action that can be executed in a configuration, then either this additional action can be executed at the final node after executing the chain, or it was $\internal$-reduction inside the chain.

\begin{theorem}[Chaining a reduction to the end of an sequential sequence]\label{thm:chainend}
For all nodes $\Sigma_1, \Sigma_4, \Sigma_4'$, process identifier $\iota$, action $a$, and action traces $l$, which only include internal actions paired with any process identifiers, if $\nstepk{\Sigma_1}{l}{\Sigma_4}$ and $\nstep{\Sigma_1}{\iota}{a}{\Sigma_4'}$, then there are two potential scenarios:

\begin{itemize}
\item Either there is a node $\Sigma_5$ which can be reached by a reduction from $\Sigma_4$: $\nstep{\Sigma_4}{\iota}{a}{\Sigma_5}$.
\item Or $a = \internal$ and there are nodes $\Sigma_2, \Sigma_3$ and action traces $l_1, l_2$, which can be used to split the sequential reduction steps: $\nstepk{\Sigma_1}{l_1}{\Sigma_2}$, $\nstep{\Sigma_2}{\iota}{a}{\Sigma_3}$ and $\nstepk{\Sigma_3}{l_2}{\Sigma_4}$, moreover $l = l_1 \concat [(\iota, a)] \concat l_2$.
\end{itemize}

\end{theorem}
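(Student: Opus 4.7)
The plan is to proceed by induction on the length of the sequential reduction chain $\nstepk{\Sigma_1}{l}{\Sigma_4}$. For the base case ($l = []$ and hence $\Sigma_1 = \Sigma_4$), the first disjunct is immediate: take $\Sigma_5 = \Sigma_4'$, since the hypothesis $\nstep{\Sigma_1}{\iota}{a}{\Sigma_4'}$ directly witnesses $\nstep{\Sigma_4}{\iota}{a}{\Sigma_5}$.

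For the inductive case, decompose the chain as $\nstep{\Sigma_1}{\iota_0}{\internal}{\Sigma_1^\circ}$ followed by $\nstepk{\Sigma_1^\circ}{l'}{\Sigma_4}$, with $l = (\iota_0, \internal) :: l'$. Distinguish two subcases on whether $\iota = \iota_0$. If $\iota = \iota_0$, apply the inter-process version of \Cref{thm:internalswap} to the two steps emanating from $\Sigma_1$, obtaining a node $\Sigma_3$ with $\nstep{\Sigma_1^\circ}{\iota}{a}{\Sigma_3}$ and either $\nstep{\Sigma_4'}{\iota}{\internal}{\Sigma_3}$ or $\Sigma_4' = \Sigma_3$. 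If $\iota \neq \iota_0$, apply \Cref{thm:stepchain} (whose no-both-\emph{spawn} premise is satisfied because the first action is $\internal$) to obtain a node $\Sigma_3$ with $\nstep{\Sigma_1^\circ}{\iota}{a}{\Sigma_3}$.

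In either subcase, apply the induction hypothesis to the tail chain $\nstepk{\Sigma_1^\circ}{l'}{\Sigma_4}$ together with the freshly obtained reduction $\nstep{\Sigma_1^\circ}{\iota}{a}{\Sigma_3}$. The IH yields either a node $\Sigma_5$ with $\nstep{\Sigma_4}{\iota}{a}{\Sigma_5}$ (returning the first disjunct directly), or a decomposition $l' = l_1' \concat [(\iota, a)] \concat l_2'$ with $a = \internal$ and intermediate nodes $\Sigma_2, \Sigma_3'$ such that $\nstepk{\Sigma_1^\circ}{l_1'}{\Sigma_2}$, $\nstep{\Sigma_2}{\iota}{\internal}{\Sigma_3'}$, and $\nstepk{\Sigma_3'}{l_2'}{\Sigma_4}$. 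In the latter case, prepending $(\iota_0, \internal)$ to $l_1'$ and using the initial step $\nstep{\Sigma_1}{\iota_0}{\internal}{\Sigma_1^\circ}$ produces the required split decomposition for $l$.

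The main obstacle is the bookkeeping across cases and the careful reconstruction of the split. A subtle point is the $\Sigma_4' = \Sigma_3$ subcase of \Cref{thm:internalswap}: it only arises when action $a$ terminates the process, so $a \neq \internal$, and the split disjunct of the IH is inapplicable — the first disjunct is necessarily returned. The proof never needs a true diamond between $\Sigma_4'$ and $\Sigma_3$ in the $\iota \neq \iota_0$ subcase; it is enough that the replay from $\Sigma_1^\circ$ exists, because the IH only constrains the tail starting at $\Sigma_1^\circ$, not its interaction with $\Sigma_4'$.
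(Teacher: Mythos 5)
Your induction set-up is essentially the paper's (induction on the chain, identical base case, pushing the $a$-step through the peeled first chain step, and reconstructing the split by prepending $(\iota_0,\internal)$), and your use of \Cref{thm:stepchain} in the $\iota \neq \iota_0$ subcase is correct. But there is a genuine gap in the subcase $\iota = \iota_0$ with $a = \internal$. There you invoke \Cref{thm:internalswap} to obtain a node $\Sigma_3$ with $\nstep{\Sigma_1^\circ}{\iota}{\internal}{\Sigma_3}$ — that is, you claim process $\iota$ can take \emph{another} internal step right after the one it just took. This need not hold: sequential evaluation terminates, so if the step just taken was, say, $\rewrites{\elet{x}{\Box}{x}::\idfs}{\texttt{5}}{\idfs}{\texttt{5}}$, the only action available to that process afterwards is $\term$, not $\internal$. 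In other words, the instance of \Cref{thm:internalswap} with $a = \internal$ (where by determinism both premises denote the same step) is not provable, and a formalisation would get stuck on exactly this obligation. Your closing remark about the $\Sigma_4' = \Sigma_3$ subcase addresses a different subtlety and does not cover this one, since it presupposes the lemma application has already succeeded.

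The missing idea is the case distinction the paper makes \emph{before} any confluence reasoning: whether $a = \internal$ and $(\iota, a)$ occurs in $l$. In your peeled-step formulation this amounts to: when $\iota = \iota_0$ and $a = \internal$, determinism (\Cref{thm:determinism}, lifted to nodes through rules \ref{OS:seq} and \ref{OS:nother}) gives $\Sigma_4' = \Sigma_1^\circ$, so the $a$-step \emph{is} the first chain step, and the second disjunct holds immediately with $l_1 = []$, $\Sigma_2 = \Sigma_1$, $\Sigma_3 = \Sigma_1^\circ$, $l_2 = l'$ — no diamond and no induction hypothesis are needed, because $\Sigma_4'$ does not occur in the conclusion of the split disjunct at all; the split is just a decomposition of the given chain. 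With that subcase carved out, \Cref{thm:internalswap} is only ever applied with $a \neq \internal$, where it is sound, and the remainder of your argument (including your accurate observation that only the replay half of its conclusion is used) goes through as in the paper.
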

\begin{proof}
We proved this theorem by induction on the reduction chain $\nstepk{\Sigma_1}{l}{\Sigma_4}$. The base case is solved by the premise $\nstep{\Sigma_1}{\iota}{a}{\Sigma_4'}$ (by choosing $\Sigma_5 = \Sigma_4'$), since $\nstepk{\Sigma_1}{l}{\Sigma_4}$ was a 0-step reduction, thus $\Sigma_1 = \Sigma_4$.
In the inductive case, we did case distinction whether $a = \internal$ and $(\iota, a)$ is included in $l$. If this is not true, we can make the reduction determined by $(\iota, a)$ from the configuration $\Sigma_4$ based on the induction hypothesis and \Cref{thm:internalswap}. Otherwise $(\iota, a) = (\iota, \internal)$ is included in the action trace $l$.
\end{proof}

This theorem expresses one of the fundamental properties needed to prove that $\nstepstar{}{}$ is a weak bisimulation, (\Cref{thm:equiv} below).
The next theorem is the other fundamental property required. If there is an action that is executed at the end of the execution of a sequential reduction chain, and it can be executed in the starting configuration too, then from the result of the second derivation the result of the first one can be reached by only sequential steps.

\begin{theorem}[Confluence of sequential reductions]\label{thm:chainfront}
For all nodes $\Sigma_1, \Sigma_2, \Sigma_2', \Sigma_3$, process identifier $\iota$, and action $a$, if $\nstepstar{\Sigma_1}{\Sigma_2}$, and a reduction can be done in the starting and in the final configuration too: $\nstep{\Sigma_1}{\iota}{a}{\Sigma_2'}$, and $\nstep{\Sigma_2}{\iota}{a}{\Sigma_3}$, then $\nstepstar{\Sigma_2'}{\Sigma_3}$.
\end{theorem}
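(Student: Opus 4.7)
The natural strategy is induction on the length $n$ of the sequential reduction chain $\nstepstar{\Sigma_1}{\Sigma_2}$. For the base case $n=0$ we have $\Sigma_1 = \Sigma_2$, so both hypotheses $\nstep{\Sigma_1}{\iota}{a}{\Sigma_2'}$ and $\nstep{\Sigma_1}{\iota}{a}{\Sigma_3}$ produce the same successor: given the label $(\iota,a)$, the inter-process rules of \Cref{fig:interprocsem} each have at most one applicable instance, and the conclusion inherits determinism from \Cref{thm:determinism} (via \ref{OS:nother}) and from the deterministic bookkeeping of the ether and process pool for the remaining rules. Hence $\Sigma_2' = \Sigma_3$ and the empty chain suffices.

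For the inductive step, decompose $\nstepstar{\Sigma_1}{\Sigma_2}$ as $\nstep{\Sigma_1}{\iota''}{\internal}{\Sigma_1^*}$ followed by $\nstepstar{\Sigma_1^*}{\Sigma_2}$ of length $n-1$, and split on whether $\iota = \iota''$. If $\iota = \iota''$, apply the inter-process variant of \Cref{thm:internalswap} to $\nstep{\Sigma_1}{\iota}{\internal}{\Sigma_1^*}$ and $\nstep{\Sigma_1}{\iota}{a}{\Sigma_2'}$: this yields a node $\Sigma^\dagger$ with $\nstep{\Sigma_1^*}{\iota}{a}{\Sigma^\dagger}$ together with either $\nstep{\Sigma_2'}{\iota}{\internal}{\Sigma^\dagger}$ or $\Sigma_2' = \Sigma^\dagger$. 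The induction hypothesis, instantiated with the shorter chain, $\nstep{\Sigma_1^*}{\iota}{a}{\Sigma^\dagger}$ and $\nstep{\Sigma_2}{\iota}{a}{\Sigma_3}$, produces $\nstepstar{\Sigma^\dagger}{\Sigma_3}$; prepending the one-step (or trivial) transition from $\Sigma_2'$ to $\Sigma^\dagger$ gives the desired $\nstepstar{\Sigma_2'}{\Sigma_3}$.

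If $\iota \neq \iota''$, use \Cref{thm:stepchain} in both directions (permissible since neither action is a spawn): from $\nstep{\Sigma_1}{\iota}{a}{\Sigma_2'}$ and $\nstep{\Sigma_1}{\iota''}{\internal}{\Sigma_1^*}$ we obtain $\nstep{\Sigma_2'}{\iota''}{\internal}{\Sigma^A}$ and $\nstep{\Sigma_1^*}{\iota}{a}{\Sigma^B}$ for some $\Sigma^A, \Sigma^B$. A diamond argument then forces $\Sigma^A = \Sigma^B$: since $\iota \neq \iota''$, the sequential step on process $\iota''$ touches only its own frame stack and expression while the step on process $\iota$ acts on its own local state and, at worst, on the ether entry indexed by $(\iota, \cdot)$; these modifications commute component-wise in the ether and process pool. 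Once $\Sigma^A = \Sigma^B$ is established, applying the induction hypothesis to $\nstepstar{\Sigma_1^*}{\Sigma_2}$, $\nstep{\Sigma_1^*}{\iota}{a}{\Sigma^B}$ and $\nstep{\Sigma_2}{\iota}{a}{\Sigma_3}$ gives $\nstepstar{\Sigma^B}{\Sigma_3}$, and prepending the internal step $\nstep{\Sigma_2'}{\iota''}{\internal}{\Sigma^A}$ closes the case.

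The main obstacle is the diamond lemma hidden in the $\iota \neq \iota''$ case: although \Cref{thm:stepchain} gives existence of both commuting reductions, it does not by itself identify their targets. Proving the equality requires a careful case analysis over the five inter-process rules applied to the $\iota$-indexed step, with particular care for \ref{OS:nsend}, \ref{OS:narrive} and \ref{OS:nspawn}, where ether updates and pid-freshness checks must be shown to remain valid and to produce identical results regardless of whether they are performed before or after an unrelated sequential step on process $\iota''$. The spawn-exclusion in \Cref{thm:stepchain} does not apply here because the commuting action is internal, but the pid-freshness side condition of \ref{OS:nspawn} will need to be tracked explicitly when $a$ itself is a spawn.
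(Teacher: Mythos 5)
Your proposal is correct and follows the same skeleton as the paper's proof: induction on the length of the sequential chain, determinism in the base case (the paper likewise appeals to \Cref{thm:determinism}, implicitly lifted to node reductions with a fixed label $(\iota,a)$, exactly as you argue), and commutation plus the induction hypothesis in the step. Where you genuinely add to the paper is in the inductive case, which the paper dispatches in one line citing only \Cref{thm:stepchain}. You make explicit two points that line glosses over. First, you split on whether the leading internal step belongs to the reducing process $\iota$; this split is unavoidable, since the hypothesis $\iota \neq \iota'$ of \Cref{thm:stepchain} excludes precisely the same-process subcase, which needs the node-level half of \Cref{thm:internalswap} exactly as you invoke it (the paper advertises that lemma only as a stepping stone to \Cref{thm:chainend}, but it is needed here too). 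Second, you correctly observe that \Cref{thm:stepchain} as stated merely asserts that $a'$ remains executable after $a$ without identifying the two corner nodes, so the equality $\Sigma^A = \Sigma^B$ must be established separately; your disjointness argument (an internal step touches only the frame stack and expression of process $\iota''$ and never the ether, while the $\iota$-step touches only the $\iota$-entry of the pool, the ether, and for \ref{OS:nspawn} a fresh pid whose freshness is preserved because internal steps do not change the pool's domain) is the right way to close this gap, which the formalisation must contain even though the paper's prose omits it. One cosmetic slip: an arrival step modifies the ether entry indexed by the pair (source, $\iota$), i.e.\ $(\cdot,\iota)$ rather than $(\iota,\cdot)$, but this does not affect your disjointness reasoning.
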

\begin{proof}
The proof of this property is also carried out by induction on the reduction chain $\nstepstar{\Sigma_1}{\Sigma_2}$. In the base case $\Sigma_1 = \Sigma_2$ and by \Cref{thm:determinism}, $\Sigma_2' = \Sigma_3$, while the proof of the inductive case is based on \Cref{thm:stepchain} and the induction hypothesis.
\end{proof}

What if this potentially non-sequential action was the arrival of an exit signal? That will potentially terminate a process, which could have taken some internal steps. We note that with the $\longrightarrow^*$ reduction in the conclusion we do not say that the steps are preserved, thus the result node after the arrival of the exit signal can take fewer internal steps by leaving the steps for the terminated process out.

\section{Program Equivalence}\label{sec:bisim}

In this section, we investigate program equivalence using bisimulation. Bisimulations are relations between nodes that are preserved by the reduction steps.

\begin{definition}[Bisimulation]\label{def:bisim}
A relation $R$ is a bisimulation if it satisfies the following two properties:

\begin{itemize}
\item For all nodes $\Sigma_1, \Sigma_2, \Sigma_1'$, process identifiers $\iota$, and actions $a$, if $(\Sigma_1, \Sigma_2) \in R$ and $\nstep{\Sigma_1}{\iota}{a}{\Sigma_1'}$, then there is a node $\Sigma_2'$, which is reducible from $\Sigma_2$ with the action $a$: $\nstep{\Sigma_2}{\iota}{a}{\Sigma_2'}$, and $(\Sigma_1',\Sigma_2') \in R$.
\item For all nodes $\Sigma_1, \Sigma_2, \Sigma_2'$, process identifiers $\iota$, and actions $a$, if $(\Sigma_1, \Sigma_2) \in R$ and $\nstep{\Sigma_2}{\iota}{a}{\Sigma_2'}$, then there is a node $\Sigma_1'$, which is reducible from $\Sigma_1$ with the action $a$: $\nstep{\Sigma_1}{\iota}{a}{\Sigma_1'}$, and $(\Sigma_1',\Sigma_2') \in R$.
\end{itemize}

\end{definition}

\noindent
We can show that equality satisfies the above conditions of being a bisimulation.

\begin{theorem}
The equality of nodes is a bisimulation.
\end{theorem}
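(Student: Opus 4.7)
The plan is to take $R$ to be the diagonal relation on nodes, that is $R = \{(\Sigma, \Sigma) \mid \Sigma \in \textit{Node}\}$, and to verify the two clauses of \Cref{def:bisim} directly. Both clauses collapse to the same observation, so it suffices to describe the first and note that the second is symmetric.

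For the forward clause, assume $(\Sigma_1, \Sigma_2) \in R$ and $\nstep{\Sigma_1}{\iota}{a}{\Sigma_1'}$. Membership in $R$ gives $\Sigma_1 = \Sigma_2$, so I would pick $\Sigma_2' := \Sigma_1'$ as the witness. The required reduction $\nstep{\Sigma_2}{\iota}{a}{\Sigma_2'}$ is then obtained from the hypothesis by rewriting along $\Sigma_2 = \Sigma_1$, and the condition $(\Sigma_1', \Sigma_2') \in R$ holds by reflexivity of equality. The second clause is dealt with by exchanging the roles of $\Sigma_1$ and $\Sigma_2$ in the argument above.

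No induction, case analysis or appeal to the inter-process reduction rules is actually needed: the whole content of the statement is that equality is a congruence with respect to the labelled transition relation, which is immediate. In Coq this will amount to unfolding the definition of bisimulation, introducing the witnesses, and discharging each subgoal by a rewrite followed by \texttt{reflexivity} (or \texttt{assumption}). The only mild subtlety worth checking is that the reduction hypothesis is preserved under rewriting by $\Sigma_1 = \Sigma_2$; this is purely syntactic and presents no obstacle. The theorem therefore serves mainly as a sanity check that $R$ as defined is nonempty and that the definition of bisimulation is stable under the trivial relation, before moving on to the more substantive claim that $\nstepstar{}{}$ itself is a (weak) bisimulation, where \Cref{thm:chainend} and \Cref{thm:chainfront} do the real work.
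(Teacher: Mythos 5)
Your proposal is correct and matches the paper's proof, which simply observes that the statement is an immediate consequence of \Cref{def:bisim}: you instantiate $R$ as the diagonal relation, rewrite along $\Sigma_1 = \Sigma_2$, choose $\Sigma_2' := \Sigma_1'$, and close both clauses by reflexivity. Nothing more is needed.
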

\begin{proof}
This property is just a simple consequence of the definition of bisimulation.
\end{proof}

We also defined a relaxed variant: weak bisimulations omit $\internal$ actions taken in the semantics, so only communication actions should preserve the relation.

\begin{definition}[Weak bisimulation]\label{def:wbisim}
A relation $R$ is a weak bisimulation if it satisfies the following two properties:

\begin{itemize}
\item For all nodes $\Sigma_1, \Sigma_2, \Sigma_1'$, process identifiers $\iota$, and actions $a \neq \internal$, if $(\Sigma_1, \Sigma_2) \in R$ and $\nstep{\Sigma_1}{\iota}{a}{\Sigma_1'}$, then there are nodes $\Sigma_2^1, \Sigma_2^2, \Sigma_2'$, which are reducible from $\Sigma_2$ in the following way: $\nstepstar{\Sigma_2}{\Sigma_2^1}$, $\nstep{\Sigma_2^1}{\iota}{a}{\Sigma_2^2}$, and $\nstepstar{\Sigma_2^2}{\Sigma_2'}$, and $(\Sigma_1',\Sigma_2') \in R$.
\item For all nodes $\Sigma_1, \Sigma_2, \Sigma_1'$, process identifiers $\iota$, and actions $a \neq \internal$, if $(\Sigma_1, \Sigma_2) \in R$ and $\nstep{\Sigma_2}{\iota}{a}{\Sigma_2'}$, then there are nodes $\Sigma_1^1, \Sigma_1^2, \Sigma_1'$, which are reducible from $\Sigma_1$ in the following way: $\nstepstar{\Sigma_1}{\Sigma_1^1}$, $\nstep{\Sigma_1^1}{\iota}{a}{\Sigma_1^2}$, and $\nstepstar{\Sigma_1^2}{\Sigma_1'}$, and $(\Sigma_1',\Sigma_2') \in R$.
\end{itemize}

\end{definition}

\noindent
Bisimulations satisfy the natural property of being weak bisimulations.

\begin{theorem}
Bisimulations are weak bisimulations.
\end{theorem}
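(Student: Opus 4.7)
The plan is to give a direct, almost immediate reduction of the weak bisimulation conditions to the (stronger) bisimulation conditions, using reflexivity of $\longrightarrow^*$ to supply the surrounding sequential reduction chains with zero steps.

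First I would fix a bisimulation $R$ and a pair $(\Sigma_1,\Sigma_2)\in R$, and then verify the two clauses of \Cref{def:wbisim} in turn. For the first clause, suppose $a \neq \internal$ and $\nstep{\Sigma_1}{\iota}{a}{\Sigma_1'}$. Applying the first clause of \Cref{def:bisim} (which is stated uniformly for all actions, including the case $a \neq \internal$ we need here) yields some $\Sigma_2'$ with $\nstep{\Sigma_2}{\iota}{a}{\Sigma_2'}$ and $(\Sigma_1',\Sigma_2')\in R$. To instantiate the weak bisimulation's witnesses, I choose $\Sigma_2^1 := \Sigma_2$ and $\Sigma_2^2 := \Sigma_2'$, and take the required reduction sequences $\nstepstar{\Sigma_2}{\Sigma_2^1}$ and $\nstepstar{\Sigma_2^2}{\Sigma_2'}$ to be the reflexive (zero-step) instances of $\longrightarrow^*$. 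The middle step $\nstep{\Sigma_2^1}{\iota}{a}{\Sigma_2^2}$ is precisely the one produced by the bisimulation property, and $(\Sigma_1',\Sigma_2')\in R$ is already in hand.

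The second clause is entirely symmetric: given $\nstep{\Sigma_2}{\iota}{a}{\Sigma_2'}$ with $a \neq \internal$, the second clause of \Cref{def:bisim} supplies $\Sigma_1'$ with $\nstep{\Sigma_1}{\iota}{a}{\Sigma_1'}$ and $(\Sigma_1',\Sigma_2')\in R$, and again I choose $\Sigma_1^1 := \Sigma_1$, $\Sigma_1^2 := \Sigma_1'$ with zero-step surrounding chains.

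There is no real obstacle here; the theorem essentially records that \Cref{def:wbisim} is a weakening of \Cref{def:bisim}, and the only subtlety worth making explicit in the proof text is the appeal to the reflexive case of $\longrightarrow^*$ to instantiate the two sequential chains. In Coq this will correspond to applying the reflexivity constructor of the transitive–reflexive closure twice and then discharging the middle step by the original bisimulation hypothesis.
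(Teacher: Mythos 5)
Your proof is correct and follows essentially the same route as the paper's: both instantiate the surrounding sequential chains in \Cref{def:wbisim} with zero-step (reflexive) instances of $\longrightarrow^*$ and obtain the middle labelled step directly from \Cref{def:bisim}. Your write-up merely spells out the symmetric second clause and the choice of witnesses more explicitly than the paper does.
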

\begin{proof}
This property is also a simple consequence of the definitions, since we can choose 0-step reductions for $\nstepstar{\Sigma_2}{\Sigma_2^1}$ and $\nstepstar{\Sigma_2^2}{\Sigma_2'}$ in \Cref{def:wbisim}, while the middle step $\nstep{\Sigma_2^1}{\iota}{a}{\Sigma_2^2}$ is obtained from \Cref{def:bisim}.
\end{proof}

We consider two programs $(\Sigma, \Sigma')$ equivalent if there is a relation $R$ that is a weak bisimulation and $(\Sigma, \Sigma') \in R$.
Next, we prove that sequential evaluation is a weak bisimulation. For this proof we used the chaining properties (\Cref{thm:chainend} and \Cref{thm:chainfront}) of the semantics.

\begin{theorem}\label{thm:equiv}
$\longrightarrow^*$ (between nodes) is a weak bisimulation.
\end{theorem}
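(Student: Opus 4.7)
The plan is to unfold the definition of weak bisimulation (\Cref{def:wbisim}) and verify the two matching conditions for the relation $R\ =\ \longrightarrow^*$. Fix $(\Sigma_1,\Sigma_2)\in R$, i.e.\ $\nstepstar{\Sigma_1}{\Sigma_2}$, and fix an action $a\neq\internal$. The two directions have very different difficulty: the backward direction is essentially by definition, while the forward direction is precisely what \Cref{thm:chainend} and \Cref{thm:chainfront} were proven for.

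For the backward direction, assume $\nstep{\Sigma_2}{\iota}{a}{\Sigma_2'}$. I would simply pick $\Sigma_1^1 := \Sigma_2$, $\Sigma_1^2 := \Sigma_2'$ and $\Sigma_1' := \Sigma_2'$. Then $\nstepstar{\Sigma_1}{\Sigma_1^1}$ is the given $R$-witness, $\nstep{\Sigma_1^1}{\iota}{a}{\Sigma_1^2}$ is the assumption, $\nstepstar{\Sigma_1^2}{\Sigma_1'}$ is the zero-step reduction, and $(\Sigma_1',\Sigma_2')=(\Sigma_2',\Sigma_2')\in R$ by reflexivity of $\longrightarrow^*$. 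So this direction takes no real work.

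For the forward direction, assume $\nstep{\Sigma_1}{\iota}{a}{\Sigma_1'}$. Since $\nstepstar{\Sigma_1}{\Sigma_2}$ uses only internal actions, the trace of \Cref{thm:chainend} consists of pairs $(\iota_j,\internal)$, so its hypothesis is met. The second disjunct of \Cref{thm:chainend} requires $a=\internal$, which is excluded by assumption; therefore the first disjunct applies and yields a node $\Sigma_5$ with $\nstep{\Sigma_2}{\iota}{a}{\Sigma_5}$. Now I would invoke \Cref{thm:chainfront} on $\nstepstar{\Sigma_1}{\Sigma_2}$, $\nstep{\Sigma_1}{\iota}{a}{\Sigma_1'}$ and $\nstep{\Sigma_2}{\iota}{a}{\Sigma_5}$ to obtain $\nstepstar{\Sigma_1'}{\Sigma_5}$, which is exactly $(\Sigma_1',\Sigma_5)\in R$. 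Setting $\Sigma_2^1 := \Sigma_2$, $\Sigma_2^2 := \Sigma_5$ and $\Sigma_2' := \Sigma_5$, all three required reductions (two of them zero-step) and the $R$-membership hold.

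The only subtle point, and the part I would check most carefully in Coq, is that the hypotheses of \Cref{thm:chainend} and \Cref{thm:chainfront} are literally available: specifically, that $\longrightarrow^*$ is defined as an $\nstepk{}{l}{}$ whose trace contains only internal actions (perhaps paired with various process identifiers), so that \Cref{thm:chainend} is directly applicable without needing to re-prove an extraction lemma. Given the paragraph just before \Cref{sec:bisim} describing $\longrightarrow^*$ precisely in that way, this should be immediate, leaving the proof a straightforward two-case composition of the two chaining theorems.
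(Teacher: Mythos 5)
Your proof is correct and follows essentially the same route as the paper's: the backward direction by taking the given $\nstepstar{\Sigma_1}{\Sigma_2}$ as the pre-sequence and zero-step reductions elsewhere, and the forward direction by combining \Cref{thm:chainend} (whose $a=\internal$ disjunct is excluded by the $a\neq\internal$ restriction) with \Cref{thm:chainfront}, with the same witness choices. Your closing remark about $\longrightarrow^*$ being an internal-only trace instance of $\nstepk{}{l}{}$ is exactly how the paper makes \Cref{thm:chainend} applicable.
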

\begin{proof}
To avoid ambiguity, we use $\Lambda$ to denote the available nodes in the proof, while we keep $\Sigma$ for the definitions. To prove that a relation is a weak bisimulation, two properties need to be proved:
\begin{itemize}
\item For the first part of \Cref{def:wbisim} we have $\nstepstar{\Lambda_1}{\Lambda_2}$ and $\nstep{\Lambda_1}{\iota}{a}{\Lambda_1'}$ as assumptions. We can chain the reduction determined by $a$ to the end of the sequential reduction sequence by \Cref{thm:chainend} ($\nstep{\Lambda_2}{\iota}{a}{\Lambda_3}$ for some $\Lambda_3$). Actually, the second possible conclusion (i.e. the $a = \internal$) of this theorem can not occur here, because of the restriction $a \neq \internal$ in \Cref{def:wbisim}. We need to prove that $\nstepstar{\Lambda_2}{\Sigma_2^1}$, $\nstep{\Sigma_2^1}{\iota}{a}{\Sigma_2^2}$, $\nstepstar{\Sigma_2^2}{\Sigma_2'}$, and $\nstepstar{\Lambda_1'}{\Sigma_2'}$ for suitable $\Sigma$ nodes. We can choose $\Sigma_2^1 = \Lambda_2$, $\Sigma_2^2 = \Lambda_3$, and $\Sigma_2' = \Lambda_3$, thus the first and second $\longrightarrow^*$ reductions are 0-step reductions, while the single-step reduction is among the assumptions. The reduction $\nstepstar{\Sigma_1'}{\Sigma_2'}$ remains, which can be proved by \Cref{thm:chainfront}.
\item To satisfy the second part of \Cref{def:wbisim} we have $\nstepstar{\Lambda_1}{\Lambda_2}$ and $\nstep{\Lambda_2}{\iota}{a}{\Lambda_2'}$ as assumptions. We need to prove $\nstepstar{\Lambda_1}{\Sigma_1^1}$, $\nstep{\Sigma_1^1}{\iota}{a}{\Sigma_1^2}$, and $\nstepstar{\Sigma_1^2}{\Sigma_1'}$, and $\nstepstar{\Sigma_1'}{\Lambda_2'}$ for suitable nodes. We choose $\Sigma_1^1 = \Lambda_2$, $\Sigma_1^2 = \Lambda_2'$, and $\Sigma_1' = \Lambda_2'$, thus the first two reductions are among the assumptions, while the third and fourth ones are 0-step reductions.
\end{itemize}
\end{proof}

With this proof, we can state that a node is equivalent to the nodes to which it reduces by using sequential steps only. For example, we can derive the following property:

\newcommand{\mm}{\mytt{letrec\ } \texttt{'mm'/2} = mm \mytt{ in apply\ 'mm'/2}(f, \mytt{[}\mytt{0}\mytt{,}\mytt{1}\mytt{,}\mytt{2}\mytt{]})}
\begin{example}
For all nodes $\Pi$, ethers $\Delta$, frame stacks $K$, process identifiers $\iota$, mailboxes $\textit{q}$, list of process identifiers $\textit{pl}$, and process flags $\textit{flag}$, the following nodes are equivalent (where \textit{mm} denotes the function expression inside \verb|letrec| in \Cref{example:cerlmap}, and $f$ denotes the successor function from \Cref{ex:seqeval}).
\[
(\Delta, \iota : (K, \mm, \textit{q}, \textit{pl}, \textit{flag}) \parallel \Pi)
\]
\[
(\Delta, \iota : (K, \mytt{[}\mytt{1}\mytt{,}\mytt{2}\mytt{,}\mytt{3}\mytt{]}, \textit{q}, \textit{pl}, \textit{flag}) \parallel \Pi)
\]
\end{example}
\begin{proof}
    We have already shown in \Cref{ex:seqeval} how the complex \texttt{letrec} expression can be reduced to a list of values. Using this fact together with \Cref{thm:equiv} we can prove this equivalence (note that the sequential steps of the evaluation can be lifted to the inter-process level with rules \ref{OS:seq} and \ref{OS:nother}).
\end{proof}

There is a natural question, whether any evaluation sequence could be proved to be a bisimulation. Unfortunately, that is not the case.

\begin{theorem}
For all $l$ action traces, $\xrightarrow[]{l}$ is not a weak bisimulation.
\end{theorem}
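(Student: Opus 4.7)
Reading the theorem in context --- right after the paper's remark that ``any evaluation sequence'' cannot be proved to be a bisimulation --- the natural meaning is the negation of a universal: it is \emph{not} the case that $\xrightarrow[]{l}$ is a weak bisimulation for every $l$. I would therefore discharge it by exhibiting a single trace $l$ together with a witness pair for which the matching condition of \Cref{def:wbisim} fails. I would also flag that the literal universal reading is in fact too strong, since for $l = []$ the relation $\nstepk{}{l}{}$ degenerates to equality, which is a bisimulation (and hence a weak bisimulation) by the earlier results of this section, and certain purely $\internal$-labelled traces likewise do yield weak bisimulations.

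For the witness I would take the one-step send trace $l = [(\iota_1, \send{\iota_1}{\iota_2}{\msg{v}})]$ and the node
\[
\Sigma_1 = (\emptyset,\ \iota_1 : (\texttt{call '!'}(\iota_2, \Box)::\idfs,\ v,\ [],\ [],\ \textit{ff}) \parallel \iota_2 : (\idfs,\ e_2,\ [],\ [],\ \textit{ff})),
\]
where $e_2$ is any expression. Combining \ref{OS:send} with \ref{OS:nsend} yields $\nstep{\Sigma_1}{\iota_1}{\send{\iota_1}{\iota_2}{\msg{v}}}{\Sigma_2}$, where $\Sigma_2$ differs from $\Sigma_1$ only by having $\msg{v}$ appended to the ether slot $(\iota_1,\iota_2)$ and an empty frame stack at $\iota_1$. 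Thus $(\Sigma_1,\Sigma_2)$ lies in the relation $R_l$ induced by $l$.

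To break the weak bisimulation condition I would point out that from $\Sigma_2$ rules \ref{OS:msg} and \ref{OS:narrive} enable the non-internal action $(\iota_2, \arrive{\iota_1}{\iota_2}{\msg{v}})$, reaching some $\Sigma_2'$. For \Cref{def:wbisim} to be satisfied, there would have to exist a chain $\Sigma_1 \longrightarrow^* \Sigma_1^1 \xrightarrow{\iota_2 : \arrive{\iota_1}{\iota_2}{\msg{v}}} \Sigma_1^2 \longrightarrow^* \Sigma_1'$ with $(\Sigma_1', \Sigma_2') \in R_l$. However, \ref{OS:narrive} requires the ether of $\Sigma_1^1$ to already contain this signal on the edge $(\iota_1,\iota_2)$, and no $\internal$-labelled reduction can insert a signal into the ether; only \ref{OS:nsend} can, and it carries a $\send{}{}{}$ label, not $\internal$. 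Hence no matching chain exists from $\Sigma_1$, and $R_l$ is not a weak bisimulation.

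The main obstacle is precisely the universal quantifier: as noted, $l = []$ (and some purely-internal $l$) do give weak bisimulations, so a strict universal reading cannot hold literally. A rigorous formulation would either restrict $l$ to traces containing at least one externally observable action, or equivalently restate the theorem as the denial of a uniform bisimulation result across all $l$, which the single-send witness above suffices to discharge. The essential reason for failure, already visible in that witness, is that labels such as $\send{}{}{}$ and $\arrive{}{}{}$ record externally observable state changes to the ether or the process pool, and such changes cannot be manufactured by the internal-step padding that weak bisimulation permits.
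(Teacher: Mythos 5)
Your counterexample is sound, and your reading of the quantifier agrees with what the paper actually proves: the paper's proof likewise consists of a single counterexample, so it establishes ``there exists an $l$ for which $\xrightarrow[]{l}$ is not a weak bisimulation'' rather than the literal universal, and your objection via $l = []$ (where $\xrightarrow[]{l}$ degenerates to equality, a bisimulation and hence a weak bisimulation by the two earlier theorems of the section) is a valid criticism of the phrasing. Where you genuinely diverge is in the witness and the failure mechanism. The paper picks a \emph{purely internal} trace --- the two $\internal$ steps evaluating \texttt{let X = 0 in X} at pid $0$ --- and challenges with the arrival of an exit signal with reason \texttt{'kill'} (link flag $\textit{ff}$): the mimicking transition from the second node exists, but the \emph{closing} condition of \Cref{def:wbisim} fails, because $\Sigma_1'$ contains a dead process that can never replay the two $\internal$ steps that membership in $\xrightarrow[]{l}$ demands. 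You instead pick a one-step send trace and kill the mimicking transition itself: $\arrive{\iota_1}{\iota_2}{\msg{v}}$ is enabled from $\Sigma_2$ but not from $\Sigma_1$ even after padding, since $\longrightarrow^*$ never modifies the ether (only \ref{OS:nsend} inserts signals). Your route is arguably more elementary --- it needs only ``internal steps preserve the ether'', not the exit-signal rules --- while the paper's choice is more informative in context: it exhibits failure for a purely internal trace, which is the sharp contrast with \Cref{thm:equiv} (the union over \emph{all} internal traces, $\longrightarrow^*$, is a weak bisimulation, yet any fixed internal trace is not).

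One side remark in your proposal is wrong, although it does not affect your counterexample: the claim that ``certain purely $\internal$-labelled traces likewise do yield weak bisimulations''. The paper's own witness is purely internal, and the same device defeats \emph{every} nonempty internal trace: place an $\exit{\texttt{'kill'}}{\textit{ff}}$ signal in the ether of $\Sigma_1$ addressed to a pid occurring in $l$; after its (non-internal) arrival that pid is dead, so no $\Sigma_2'$ with $(\Sigma_1', \Sigma_2') \in\ \xrightarrow[]{l}$ can exist. Since every purely internal trace is realizable (give each mentioned pid a looping computation), the only traces inducing weak bisimulations are $l = []$ and \emph{unrealizable} traces, whose induced relation is empty and satisfies \Cref{def:wbisim} vacuously. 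For the same reason your suggested repair --- restricting to traces with at least one observable action --- is not quite right either: a trace in which the same pid performs $\term$ three times contains only observable actions, is unrealizable under the rules of \Cref{fig:interprocsem}, and hence induces the empty relation, a vacuous weak bisimulation. The robust restatement is simply the existential one, which both your proof and the paper's deliver.
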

\begin{proof}
We can prove this theorem by providing a counterexample. Here, we just give the idea of it: consider the process (with identifier $0$) that evaluates \texttt{let X = 0 in X}. This process terminates in two sequential steps according to the semantics. By the definition of the weak bisimulation, taking a reduction step determined by any action (specifically for $\arrive{0}{1}{\exit{\texttt{'kill'}}{\texttt{'false'}}}$), the result configurations should be reducible to each other (by two sequential steps). \texttt{let X = 0 in X} evaluates to a dead process with the action above, which naturally could not be reduced to anything by sequential steps.
\end{proof}

In this section we defined program equivalence based on bisimulations and proved that sequential evaluation is a bisimulation. With the help of these definitions and theorems, we can establish the equivalence of simple programs. As we noted at the start of the section, all of the definitions and theorems presented here are formalised in the Coq proof assistant~\cite{coreerlangmini}. We plan to further investigate bisimulations to enable reasoning about more complex programs.

\section{Related Work}\label{sec:work}

The results presented in this paper are extensions to our work on sequential Core Erlang~\cite{horpacsi2022program,bereczky2020machine,bereczky2021validation,bereczky2020core}. We mainly based the concurrent semantics on the work of Fredlund~\cite{fredlund2001framework}, Harrison~\cite{harrison2017coerl}, and Lanese et al.~\cite{lanese2019playing}. The general idea of an interaction semantics of actor languages is described in the work of Mason and Talcott~\cite{mason1991equivalence}.

The formalisation of Fredlund~\cite{fredlund2001framework} is the most detailed regarding both the sequential and concurrent parts of Erlang, which also faithfully follows the documentation of Erlang~\cite{erlangref}. However, it considers signal transfer as an atomic operation (i.e. when a signal is sent, it immediately arrives), while according to signal ordering guarantee~\cite{erlangrefprocs}, the order of the signals sent from an entity to the same entity is preserved, which means that two signals that are targeting different entities can arrive in arbitrary order. The semantics of Fredlund~\cite{fredlund2001framework} differentiates active and passive termination signals, while we denote these by the link flag of the exit signal.

Moreover, the work of Fredlund~\cite{fredlund2001framework} differentiates only three actions on the inter-process level semantics: input, output, and silent. This approach closely follows the general idea of the interaction semantics~\cite{mason1991equivalence}. With our approach, we can simulate input and output actions: \textit{send} actions can be considered as output actions, \textit{arr} actions are the input actions, while every other action can be regarded as silent. The advantage of our semantics is that we can distinguish more classes of reduction sequences, moreover, we also exploit this property: the theorems discussed in \Cref{sec:property} and \Cref{sec:bisim} involving sequential reductions would not hold, if other actions (e.g. $\setflag$) were considered as silent.

Lanese et al.~\cite{lanese2019playing} describe their results on bisimulations, and prove a number of system equivalences (e.g. renaming, normalisation). They also use ethers to store messages, however, their approach (deliberately) ignores the guarantee for signal ordering~\cite{erlangrefprocs}. Moreover, they do not formalise signals except messages, and used only a small subset of Core Erlang. 
Still, we incorporated some of their ideas in the formalisation of program equivalence, and plan to pursue this topic more in detail.

The work of Vidal et al.~\cite{lanese2018theory,lanese2021casual,vidal2014towards,nishida2016reversible} is also related to ours, they define multiple semantics (reduction semantics and small-step semantics) for Core Erlang to express reversible computation. The language they formalised has a similar coverage to our formalisation, they also formalised concurrent semantics with an ether and action traces, moreover, they also proved similar theorems about the properties. However, their formalisations do not include signals except messages.

Harrison~\cite{harrison2017coerl} presented a formalisation of a minimal subset of Core Erlang in his paper, which has also been formalised in Isabelle. His formalisation techniques aided us while creating a usable Coq definition of the concurrent semantics, however, his formalisation includes only a few of the language elements, and he too treated signal transfer as an atomic operation.

An important advantage of our formalisation compared to most of the existing ones is that it is also implemented in Coq in an open-source project. Most of the existing works are paper-based formalisations or the machine-checked version is no longer available to the public. Furthermore, our semantics implements the signal ordering guarantee~\cite{erlangrefprocs} more faithfully than the  other discussed approaches.

\section{Conclusion and Future Work}\label{sec:conclusion}

In this paper, we described our three-level, modular formal semantics for concurrent Core Erlang. We discussed a number of theorems about the determinism and confluence properties of the semantics, defined bisimulations to be able to reason about program equivalence, and proved that side-effect-free evaluations of a program provide equivalent programs. Finally, we compared our approach with the results of other authors. The formalisation has also been implemented as an open-source project in the Coq proof assistant~\cite{coreerlangmini}.

In the future we are planning to further extend this formalisation. Our future goals include the following points:

\begin{itemize}
    \item Investigating bisimulations in more depth, potentially by following a similar path to Lanese \emph{et al}~\cite{lanese2019playing} who defined barbed congruence, that enabled them to develop a proof technique to effectively reason about program equivalence.
    \item Proving the equivalence between more complex examples of concurrent programs equivalent, as well as investigating equivalence between sequential and concurrent algorithms.
    \item Extending the semantics to cover exceptions and other side effects (e.g. input-output) based on our previous results~\cite{bereczky2020machine}.
    \item Implementing a formalisation of the module system within this semantics.
    \item In the longer term, an extensive, usable formalisation of Erlang is our ultimate goal.
\end{itemize}

\section*{Acknowledgements}
Supported by the ÚNKP-21-3 New National Excellence Program of the Ministry for Innovation and Technology from the source of the National Research, Development and Innovation Fund. Supported by ``Application Domain Specific Highly Reliable IT Solutions'' financed under the Thematic Excellence Programme TKP2020-NKA-06 (National Challenges Subprogramme) funding scheme by the National Research, Development and Innovation Fund of Hungary.

\bibliographystyle{actaplaindoi}
\bibliography{biblio}

\end{document}